\@date \else {\vskip3ex \centering\footnotesize\@date\par\vskip1ex}\fi
\else \@footnotetext{\@setdate}\fi}
\numberwithin{equation}{section}
\theoremstyle{plain}%     
\newtheorem{thm}{Theorem}[]
\newtheorem{cor}{Corollary}[]
\newtheorem{alg}{Algorithm}[]
\theoremstyle{definition}%   
\newtheorem{ex}{Example}[]
\newcounter{example}
\newcommand{\1}{1 \kern -.41em 1 }
\newcommand{\size}{{\mathrm{size}}\,}
\newcommand{\id}{{\mathrm{id}}}
\newcommand{\conv}{{\mathrm{conv}}\,}
\newcommand{\aff}{{\mathrm{aff}}\,}
\newcommand\R{\mathbb{R}}
\newcommand\Z{\mathbb{Z}}
\newcommand\N{\mathbb{N}}
\renewcommand\P{\mathbb{P}}
\newcommand\Q{\mathbb{Q}}
\newcommand\NP{\mathbb{N}\mathbb{P}}
\newcommand\CA{\mathcal{A}}
\newcommand\CI{\mathcal{I}}
\newcommand\CP{\mathcal{P}}
\newcommand\CR{\mathcal{R}}
\newcommand\CS{\mathcal{S}}
\newcommand\CT{\mathcal{T}}
\newcommand\CF{\mathcal{F}}
\newcommand\CO{\mathcal{O}}
\newcommand\CC{\mathcal{C}}
\newcommand\DS{\displaystyle}
\newenvironment{problem}[1]{
\vspace{.4cm}
%\vspace{3mm plus 1mm minus 1mm}
%  \begin{samepage}
\begin{minipage}[c]{0.9\textwidth}
 % \begin{quote}
       \noindent
       {#1.}
        \nopagebreak
        \it
        \begin{list}{}
          {\setlength{\labelwidth}{16mm}% {2cm}fuer 12pt
           \setlength{\leftmargin}{18mm}% {25mm}
           \topsep2pt
           \itemsep2pt
           
          }
        \nopagebreak
  }
{\end{list}\bigskip%\vspace{3mm plus 1mm minus 1mm}
\end{minipage}
%\end{quote}
}
\begin{document}
\title[Dynamic Discrete Tomography]{Dynamic Discrete Tomography\footnote{NOTICE: this is an author-created, un-copyedited version of an article accepted for publication/published in \textbf{Inverse Problems}. IOP Publishing Ltd is not responsible for any errors or omissions in this version of the manuscript or any version derived from it.  The Version of Record is available online at \url{https://doi.org/10.1088/1361-6420/aaa202}.}}
\author{Andreas Alpers and Peter Gritzmann}
\address{Zentrum Mathematik, Technische Universit\"at M\"unchen, D-85747 Garching bei M\"unchen, Germany}
\email{alpers@ma.tum.de, gritzmann@tum.de}
\thanks{The authors gratefully acknowledge support through the German Research Foundation Grant GR 993/10-2 and the European COST Network MP1207.}

%\date{\today, \textsc{File:} \texttt{\jobname.pdf}}
\begin{abstract}
We consider the problem of reconstructing the paths of a set of points over time,
where, at each of a finite set of moments in time the current positions of points in space are 
only accessible through some small number of their X-rays. 
This particular particle tracking problem, with applications, e.g.,  in plasma physics, 
is the basic problem in dynamic discrete tomography.
We introduce and analyze various different algorithmic models. In particular, 
we determine the computational complexity of the problem (and various of
its relatives) and derive algorithms
that can be used in practice. As a byproduct we provide new results on 
constrained variants of min-cost flow and matching problems.
\end{abstract}

\maketitle

\section{Introduction}

In the following, the goal is to determine the paths $\CP_1,\ldots, \CP_n$
of $n$ particles in space over a period of~$t\in \N$ moments in time from 
 images taken by a a fixed number $m$ of cameras. Each spot in each camera image 
is the projection of a particle perpendicular to the
plane of the corresponding camera. The number of particles that project on the same spot can be
detected from the brightness of the image. Hence, in effect, for each moment in time 
we have the information how many particles lie on each line perpendicular to the planes
of the cameras. We will refer to this information as the {\em X-ray images} of the set of particles 
in $m$ directions; details will be given in Section \ref{sect:notation}.

This problem lies at the core of {\em dynamic discrete tomography}, and we will refer to it as
{\em tomographic point tracking} or {\em tomographic particle tracking}. 
As it turns out, the problem comprises two different but coupled basic underlying 
tasks, the reconstruction of a finite set of points from few of their X-ray images 
({\em discrete tomography}) and the identification of the points over time
({\em tracking}). For surveys on various aspects of discrete tomography
see~\cite{gardner-gritzmann-99, gritzmann97, gritzmann-devries-2001, herman-kuba-99, herman-kuba-07}. 
Tracking is known in the literature as {\em data association} or {\em object tracking}
and, for more specific applications, as {\em multi-target tracking},  
{\em multisensor surveillance} or, in physics, as {\em particle tracking}; 
see \cite{objecttracking, surveillance, multitracking7, multitracking2, multitracking1, elementparticletracking, 
schnoerr1, multitracking4} and, e.g., \cite{a2010, assignmentbook09, multitracking6} for surveys.

Tomographic particle tracking was already considered in 
\cite{DPS17, Elsinga2006, kitzhofer-10, batenburg10, Williams2011} and \cite{agms-15}. 
In fact the \emph{linear programming} based heuristic introduced in  \cite{agms-15} was successfully 
applied to  determine 3D-slip velocities of a gliding arc discharge in~\cite{glidingarc-15}. 
Another \emph{linear programming} approach, different from~\cite{agms-15, glidingarc-15}, was recently proposed in~\cite{DPS17}.

In the present paper we study the problem from a mathematical and
algorithmic point of view with a special focus on the interplay
between discrete tomography and tracking. Therefore
we will distinguish the cases that for none, some or all of the
$t$ moments $\tau_1,\dots,\tau_t$ in time, a solution of the discrete tomography task at time $\tau_1,\dots,\tau_t$ 
is explicitly available (and is then considered {\em the} correct solution  
regardless whether it is uniquely determined by its X-rays). 
For $d\ge 3$, two (affine) lines in $\R^d$ in general position are disjoint.
Hence, generically, X-ray lines meet only in points of the underlying set.
Therefore even the latter situation is of considerable practical relevance. 
We will refer to it as the {\em positionally determined} case while, 
in the more general situation, we will speak of the {\em (partially)} or {\em (totally) tomographic} case of point tracking.

In the following, we discuss several models for dynamic discrete tomography, give various algorithms and 
complementing $\NP$-hardness results.
In particular, we show for the positionally determined case that the tracking problem can be solved in polynomial time 
if it exhibits a certain Markov-type property (which, effectively, allows only dependencies between 
any two consecutive time steps); see Theorem~\ref{thm:ILP-posdet}. 
The partially tomographic case, however, is $\NP$-hard even for~$t=2$; 
see Theorem~\ref{thm:tomo-matching}. 
Complementing this result, we consider the tomographic tracking problem for two directions where a 
so-called \emph{displacement field}  is assumed to be given (the displacement field uniquely determines the particle's 
next position). Again, this problem turns out to be $\NP$-hard already for~$t=2$ and certain realistic 
classes of displacement fields; see Theorem~\ref{thm:tomo-circular}. 

We then turn to the {\em rolling horizon} approach introduced in~\cite{agms-15} which proceeds successively 
from step to step. After giving a short account of this method in Section \ref{subsec-rolling-horizion} we will
show that, while being quite successful in practice, it is not guaranteed to always yield the correct solution; see Example~\ref{ex:grid1}.
Then, in~Section~\ref{sect:nonmarkov} we study the issue of of how to  incorporate additional prior 
knowledge of particle history into the models. Under rather general assumptions we show that already in the 
positionally determined case the tracking problems becomes $\NP$-hard for~$t>2;$ 
 see Theorem~\ref{thm:matching}. In particular, we show that even if the particles are known to move 
along straight lines, this prior knowledge cannot efficiently be exploited algorithmically 
(unless~$\P=\NP$); see Theorem~\ref{thm:special-weights} and Corollary~\ref{cor:straightline} 
and~\ref{cor:minmax}.

We proceed by introducing three algorithms that can be viewed as rather general paradigms of heuristics 
that involve prior knowledge about the movement of the particles and which can be used in the tomographic case. 

We then discuss \emph{combinatorial models}. In these models the positions of the particles in the next time step 
are assumed to be known approximatively in the sense that the candidate positions are confined to certain 
\emph{windows}, which are finite subsets of positions. Again, under rather general conditions, we show 
$\NP$-hardness of the respective tomographic tracking tasks; see Theorem~\ref{thm:comb2}. 
However, we also identify polynomial-time solvable special cases of practical 
relevance; see Theorem~\ref{thm:orth} and~\ref{thm:comb5}. 

The paper is organized as follows: Section~\ref{sect:notation} will introduce the relevant notation, provide 
some basic background, discuss various modeling issues including the question of how to rigorously 
incorporate a notion of `physically most likely' solution, and state our main results.

Then main focus of the present paper lies on optimization models. They are studied in quite some
detail in Section~\ref{sect:optmodels}. In particular, we consider Markov-type integer programming and rolling horizon 
models and discuss the issue of utilizing the particle history from a structural algorithmic point of view. Also, we determine the 
computational complexity of the problems. We prove various $\NP$-hardness results, both, for the partially tomographic and the totally tomographic case, 
in the latter, even when the displacement field is known. Then we give various heuristic algorithms (or, to be more precise, classes of 
algorithmic paradigms) which are designed with a view towards balancing the running time and the achieved quality of
the solutions. In the prevailing presence of $\NP$-hardness such a balancing is mandatory for a successful use 
in practical applications (of realistic size).

Section~\ref{sect:combinatorial} deals with a different way of encoding a priori information about
the movement of the particles. The studied combinatorial models utilize such information in terms of bounds 
on the number of particles in certain windows. We provide $\NP$-hardness results but also identify
polynomially solvable classes of instances. 
The paper concludes  with some final remarks in Section~\ref{sect:conclusion}.

\section{Notation, basics and main results}\label{sect:notation}

\subsection{Discrete Tomography}
Let $\mathbb{R},$ $\mathbb{Q},$ $\mathbb{Z},$ and $\mathbb{N}$ denote the set of reals, rational numbers, integers,  and natural numbers, respectively. For $q\in \N$, let $[q]=\{1,\ldots,q\}$ and $[q]_{0}=[q]\cup \{0\}$. With $\1$ we denote the all-ones vector. 
Further, let~$d,m \in\N$, $d\ge 2$, and set  
$$
\CF^d=\{F: F\subseteq \Q^d \, \land \, \textrm{$F$ is finite}\}.
$$ 
The point sets $F\in \CF^d$ model the sets of particles (in a static
environment). Let us point out that, mathematically, the restriction to $\Q^d$ 
is merely a tribute to the model of computation
that we are going to apply. In fact, we will use the {\em binary Turing machine model}
where each element is encoded in binary and hence its size is the number of
binary digits needed for this representation; see e.g. \cite{gareyjohnson}. 
(The binary size of a positive integer~$q$ is 
therefore essentially $\log(q)$.) Note that this restriction is also in accordance with 
standard practical measurements which do not allow infinite precision. 

With $\CS^d$ we denote the set of all {\em lattice lines}, i.e., the set of $1$-dimensional linear 
subspaces of $\R^d$ that are spanned by vectors from $\Z^d$. 
(Note that as $\CS^d$ coincides with the set of lines spanned by rational
vectors, there is no further restriction here.)
The lines in $\CS^d$ are (in a slight abuse of language) often referred to as directions.
In fact, in an experimental environment, these lines are the `viewing directions' 
under which the cameras `see' the particles i.e., they are perpendicular to the 
image planes of the cameras.

For $S\in \CS^d,$ we set $\CA(S)=\{v+S: v\in \Q^d\}.$
Then, for $F\in \CF^d$ and $S\in \CS^d,$ the {\em (discrete $1$-dimensional) 
X-ray of $F$ parallel to $S$} is the function
$$
X_S F:{\CA}(S)\rightarrow \N_0=\N\cup\{0\}
$$ 
defined by 
$$
X_SF(T) = |F\cap T|=\sum_{x\in T}\chi_F(x)
$$ 
for each $T\in {\CA}(S),$ where, as usual, $|F\cap T|$ denotes the cardinality of $F\cap T$ 
and $\chi_F$ is the characteristic function of $F$.

Let us point out that this definition is the basis of the paramount {\em grid model} 
of discrete tomography (see~\cite{gardner-gritzmann-99, gritzmann97, 
gritzmann-devries-2001, herman-kuba-99, herman-kuba-07} which we will use throughout
the paper. Of course, the term X-ray is meant generically here, i.e., does not necessarily 
refer to a specific imaging technique. However, the main motivation underlying the present paper 
is that of tracking physical particles which are `visible' only through the camera image of their 
projections. 

Two sets $F_1,F_2 \in \CF^d$ are called {\em tomographically equivalent} with respect to $S_1,\dots,S_m \in \CS^d$ if \[X_{S_i} F_1= X_{S_i} F_2,\]  for all  $i\in [m].$

Given $m$ different lines $S_1, \dots, S_m \in \CS^d $, the X-ray data is given in terms
of functions 
$$
f_i:{\mathcal A}(S_i)\rightarrow \N_0 \qquad (i \in [m]),
$$ 
with finite support $\mathcal{T}_i \subseteq \mathcal{A}(S_i)$,
represented by appropriately chosen data structures; see~\cite{ggp-99}.

Suppose, we are given an instance $(f_1,\ldots,f_m)$ of measurements of an otherwise unknown set
$F\in \CF^d.$ Then, of course, $|F|=\|f_1\|_{(1)}=\ldots= \|f_m\|_{(1)}$,
where $\|\cdot\|_{(1)}$ is the usual 
$1$-norm. From the data functions we can infer that $F$ must be contained in 
the {\em grid}
$$
G= \bigcap_{i=1}^m \bigcup_{T\in \CT_i} T
$$
of candidate points.
Note that, in general, even when the given instance has a unique solution 
$F,$ the grid $G$ can be a proper superset of $F.$  

\subsection{Dynamic Discrete Tomography}
In tomographic point tracking, we consider $t$ consecutive moments in 
time. Hence, in particular, we are interested in the sets $F^{(\tau)}$ of particles 
at each moment in time $\tau\in [t]$. 
Since, in general, these sets are accessible only through their X-rays taken from a given finite set of directions, they represent $t$ {\em uncoupled} instances of 
discrete tomography.

Of course, we also need to track the particles over time.
Let $P=\{p_1,\ldots,p_n\}$ denote the (abstract) set of $n$ particles. 
Then, for each $\tau\in [t]$,  we are actually interested in a
one-to-one mapping $\pi^{(\tau)}: P \rightarrow F^{(\tau)}$ that identifies the points
of $F^{(\tau)}$ with the particles. Hence, 
the particle tracks are given by $\CP_i=(\pi^{(1)}(p_i)\ldots, \pi^{(t)}(p_i)),$ $i\in [n].$ 
We will refer to this identification as {\em coupling}. 

Note that the coupling between two consecutive moments in
time bears the character of a {\em matching} in a bipartite graph. 
In tomographic point tracking, however, the tasks of discrete tomography 
and matching strongly interact. 

As it is well-known that for $m\ge 3$ already the static reconstruction 
problem of discrete tomography is $\NP$-hard \cite{ggp-99} (see also 
\cite{bdgv-08}), highly instable \cite{alpers-d03, agt-01}, and since in practical experiments space for installing
cameras is typically strictly limited (see, e.g, \cite{agms-15}) we will mainly 
focus on the case $m=2$. 
In real-world particle tracking this corresponds to images taken from two cameras 
which may be regarded as two planes in~$\R^3$ orthogonal to 
two different directions $S_1$, $S_2$, respectively.

For $m=2,$ a set $F^{(\tau)}$ satisfying the X-ray constraints with respect to 
the two given directions~$S_1$ and~$S_2$ can be efficiently determined for each 
$\tau \in [t]$; see \cite{gale57,ryser57}. Also, it can be checked efficiently, 
whether each $F^{(\tau)}$ is unique. (For related stability results see~\cite{alpers-brunetti-07,van2009stability}.)
Since a set is only very rarely uniquely determined by just two of its X-rays,  
the reconstruction will typically rely on additional physical knowledge.
(For various uniqueness 
results the reader is referred to \cite{glw11} and the literature quoted therein.) 

But even if we know for each $\tau\in [t]$ the correct set $F^{(\tau)},$ 
we still have to identify the paths of all individual particles over time. 
This turns the~$t$ otherwise uncoupled systems into a single coupled system. 
Here we will in particular discuss the question of how to utilize additional
information via the coupling, which thus might reduce the ambiguity of the uncoupled
tomographic tasks.

But how can the goal of determining a `physically most likely' solution
be modeled?  
Let $\Pi(n,t)$ denote the set of all $n$-tuples $(\CP_1,\ldots,\CP_n)$
of potential particle tracks over~$t$ moments of time for fixed point sets 
$F^{(\tau)}$. If the given tomographic particle tracking data is feasible
there may, of course, exist many different but tomographically 
equivalent solutions for each $\tau$. In any case, $|\Pi(n,t)|$ is then 
still a lower bound for the number of different potential solutions.
Now note that $|\Pi(n,t)|=(n!)^t$, and even if the order of the individual 
paths is irrelevant,  this number reduces 
only to $(n!)^{t-1},$ which is still exponential in $n$ and $t$. 

Of course, this means that any enumerative algorithm will fail in practice even
for quite moderate numbers of particles and time steps.  But even worse,
we cannot encode as input the `physical costs'~$c(\CP_1,\ldots,\CP_n)$ of all $(\CP_1,\ldots,\CP_n)$
explicitly. Even if we assume that all sets $F^{(\tau)}$ are given and
such costs can be computed as a simple function 
of the costs $w(\CP_i)$ of the individual paths, e.g., as
\[\sum_{i=1}^n w(\CP_i)\qquad \mbox{or} \qquad \max_{i\in [n]} \,w(\CP_i),\]
the number of different costs $w(\CP)$ of a particle path (which still need to be
available to determine the cost of a solution$(\CP_1,\ldots,\CP_n)$) reduces only to $n^t.$ 

In the following we will therefore in general refrain from assuming that such values 
are {\em explicit} parts of the input. We will instead assume that
an algorithm is available which computes for any solution~$(\CP_1,\ldots,\CP_n)$ 
its cost $c(\CP_1,\ldots,\CP_n)$ in time that is polynomial in all the other input data.
In some situations such an algorithm will be based on additional specific data 
which reflect how appropriate certain (local) choices are and which will then be 
regarded as be part of the input, too. 
Such an algorithm $\CO$ will be called an {\em objective function oracle}.
In the special case that $\CO$ provides the values~$w(\CP_i)$ and 
$c(\CP_1,\ldots,\CP_n)=\sum_{i=1}^n w(\CP_i),$ the oracle will be referred to as
{\em path value oracle}. 

For all practical algorithms, such an oracle will of course be specified
explicitly. (It will typically be based on weights on grid points or
pairs of grid points which augment the input of our problem.) 

\subsection{Basic algorithmic problems}
Algorithmically, our general problem of tomographic particle tracking 
for $m$ given different directions~$S_1,\ldots ,S_m \in \CS^d,$
and based on an objective function oracle $\CO$,
can now be formulated as follows.

\begin{problem}{{\sc TomTrac}${(\CO; S_1,\ldots ,S_m)}$}
\item[Instance] $t\in \N,$ X-ray data functions 
$f^{(\tau)}_1, \ldots, f^{(\tau)}_m$ for $\tau\in [t]$ with 
$\|f^{(\tau)}_1\|_{(1)}= \ldots = \|f^{(\tau)}_m\|_{(1)}=n$.
\item[Task] Decide whether, for each $\tau\in [t]$, there exists a set 
$F^{(\tau)}\in \CF^d$ such that $X_{S_i}F^{(\tau)}=f^{(\tau)}$ for all $i\in [m]$.
If so, find particle tracks $\CP_1,\ldots,\CP_n$ of
minimal cost for $\CO$ among all couplings of all 
tomographic solutions $F^{(1)}, \ldots,F^{(t)}$.
\end{problem}

Let us add a remark concerning the assumption, that the $1$-norms of
the X-ray data functions in {\sc TomTrac}${(\CO; S_1,\ldots ,S_m)}$ 
are known and assumed to coincide. First, note that the latter condition is quite 
natural if all particles are detected at each moment in time. 
Also, both conditions can be checked in polynomial time.
Since in practice, cameras have a finite field of view it may, however, happen
in an experimental setting that several previously detected particles may 
move out of the field of view of the camera. Also particles may appear 
(or reappear) during the measurements. In this situation we might at least 
aim at obtaining partial results by reconstructing partial tracks or by including dummy nodes, which account for invisible particles (for a discussion, see, e.g.,~\cite{missingdata}). 
In the following, our prime focus will lie, however, on the case that all 
$n$ particles are recorded at each moment in time $\tau\in [t].$

In the following we will distinguish the cases that for none, some or all 
$\tau\in [t]$, the correct solution~$F^{(\tau)}$ is explicitly given. 
The former will be referred to as the {\em (partially)} or {\em (totally) 
tomographic} case while we speak of the latter as  
{\em positionally determined}. In the positionally determined case the 
problem {\sc TomTrac}${(\CO; S_1,\ldots ,S_m)}$ reduces to the following problem.

\begin{problem}{{\sc Trac}$(\CO;d)$}
\item[Instance] $t\in \N$, sets $F^{(1)},\ldots,F^{(t)}\in \CF^d$ with
$|F^{(1)}|= \ldots = |F^{(t)}|=n$.
\item[Task] Find particle tracks $\CP_1,\ldots,\CP_n$ of
minimal cost for $\CO$ among all couplings of 
the sets $F^{(1)}, \ldots,F^{(t)}$.
\end{problem}

Note that each instance of {\sc Trac}$(\CO;d)$ can be viewed 
as a $t$-dimensional assignment problem. For a comprehensive survey on assignment
problems see \cite{assignmentbook09} and the references quoted therein. 

As pointed out before, for $d\ge 3$, the positionally determined case
is the generic situation even for~$m=2$ as two lines parallel to $S_1$ and $S_2$ 
do only meet in points of $F^{(\tau)}$. 
However, for~$d=2$, any two non-parallel lines in the plane meet in a point, 
whence forming a candidate grid $G^{(\tau)}$ of size $O(n^2)$. 
(Note that, generally, no fixed number $m$ of X-ray images suffices 
for unique determination.)
Also, even  in~$\R^d$ with~$d\ge 3,$ there are particle constellations 
for which the candidate grids $G^{(\tau)}$ do not consist of just~$n$ points. 
And, of course, if the resolution of the images is low, such constellations 
become relevant even in practice. 

In some applications, there is strong prior knowledge about the possible couplings available. 
Most notably is the case where a \emph{displacement field} is prescribed. 
Formally a \emph{particle displacement field} (for the time step $\tau\to\tau+1$) is a 
pair $(p,\varphi^{(\tau)}(p)-p)$ with~$\varphi^{(\tau)}:\Q^d\to\Q^d$ denoting 
a map where~$\varphi^{(\tau)}(p)-p$ denotes the displacement vector for a particle 
located at position $p$ at time~$\tau.$ In particular, if $\pi^{(\tau)}(p_i)$ denotes 
the position of the $i$-th particle at time~$\tau,$ then 
$\pi^{(\tau+1)}(p_i)=\varphi^{(\tau)}(\pi^{(\tau)}(p_i)),$ for  $i\in[n]$ and $\tau\in[t-1].$ 
Now, the tomographic tasks for the time step $\tau\to \tau+1$ consists of determining 
the positions of the particles at time $\tau$ such that the solution at time~$\tau+1$ 
determined by the displacement field matches the tomographic data for~$\tau+1.$ 
We say that $F^{(\tau)}$ and $F^{(\tau+1)}$ are \emph{$\varphi^{(\tau)}$-compatible} if 
both sets satisfy the corresponding tomographic constraints 
and $\varphi^{(\tau)}(F^{(\tau)})=F^{(\tau+1)}.$ 

For the following problem we suppose that a for any $\tau\in[t-1]$ a
displacement field $(p,\varphi^{(\tau)}(p)-p),$ $p\in\Q^d,$ is known.  
While for our main $\NP$-hardness result, the displacement field is given
explicitly, all that our model of computation actually requires is that 
for any $\tau\in[t-1]$ and $p\in G^{(\tau)},$ either a point $q\in G^{(\tau+1)}$ is
provided such that $\varphi^{(\tau)}(p)=q$ or it is reported that no such point exists.
(It is not difficult to model this formally again as some {\em displacement field oracle}).
With this understanding, the tomographic point tracking problem with given 
displacement field by the oracle~$\CO$ is as follows.

\begin{problem}{{\sc TomDisplaceTrac}${(\CO; S_1,\ldots ,S_m)}$}
\item[Instance] $t\in \N,$ X-ray data functions 
$f^{(\tau)}_1, \ldots, f^{(\tau)}_m$ for $\tau\in [t]$ with 
$\|f^{(\tau)}_1\|_{(1)}= \ldots = \|f^{(\tau)}_m\|_{(1)}=n$; displacement field $\varphi^{(\tau)}$ for $\tau\in [t-1]$.
\item[Task] Find sets $F^{(1)},\dots,F^{(t)}\in\CF^d$ such that $X_{S_i}F^{(\tau)}=f^{(\tau)}$ for all $i\in [m],$ and such that 
$F^{(\tau)}$ and $F^{(\tau+1)},$  are $\varphi^{(\tau)}$-compatible for all $\tau\in[t-1],$ 
or decide that no such sets exist. 
\end{problem}

\subsection{Main results}
Our main results are as follows.
First, we consider~{\sc TomTrac}${(\CO; S_1,\ldots ,S_m)}$ for \emph{Markov-type} models, i.e., for  
objective function oracles $\CO$ that can be given explicitly as the sum of all costs of assigning points between consecutive moments in time. We show that the corresponding version of {\sc TomTrac}${(\CO; S_1,\dots,S_m)}$ for the positionally determined case, called {\sc Trac}$(d),$ can be solved in polynomial time (Theorem~\ref{thm:ILP-posdet}).

For the tomographic case, we show that 
even if all instances are restricted to~$t=2$, and the solution~$F^{(1)}$ is given explicitly, {\sc TomTrac}${(S_1,S_2)}$ is $\NP$-hard (Theorem~\ref{thm:tomo-matching}). (Also the corresponding uniqueness problem is $\NP$-hard and the counting problem 
is $\#\P$-hard.) This result is complemented by Theorem~\ref{thm:tomo-circular}, which shows that {\sc TomDisplaceTrac}${(\CO; S_1,S_2)}$ is $\NP$-hard for $t=2$ and certain conditions imposed on the specified displacement field.

Theorem~\ref{thm:matching} shows that the problem {\sc Trac}$(\CO;d)$ is $\NP$-hard, even if all instances are restricted to a fixed $t\ge 3,$ and $\CO$ is a path value oracle. Also in the case of straight line movement and fixed $d\geq 2,$ $t\geq 3$ it is $\mathbb{N}\mathbb{P}$-complete problem to decide whether a solution of {\sc Trac}$(\CO;d)$ exists where all particles move along straight lines (Corollary~\ref{cor:straightline}).

We conclude Section~\ref{sect:optmodels} by introducing and discussing 
algorithmic paradigms that allow to include prior knowledge of the possible particle tracks. 

Section~\ref{sect:combinatorial} then studies combinatorial models. In particular, we discuss {\sc Tomography under Window Constraints}, which, under rather mild conditions, turns out to be $\NP$-hard (Theorem~\ref{thm:comb2}). On the other hand, we show in Theorem~\ref{thm:orth} that  it is in $\mathbb{P}$ if all windows are disjoint horizontal and vertical windows of width~1. Another polynomial-time solvable variant of {\sc Tomography under Window Constraints}, which arises in super-resolution imaging applications, is discussed in Theorem~\ref{thm:comb5}. %Section~\ref{sect:conclusion} concludes with some final remarks.

\section{Optimization Models}\label{sect:optmodels}

In the following we assume that we have X-ray measurements from two 
different directions $S_1,S_2\in \CS^d$ at $t$ moments in time.
As before, $G^{(\tau)}$ will denote the corresponding grid at $\tau \in [t]$. 

While the grids $G^{(\tau)}$ are subsets of $\R^d$, we need to distinguish a point 
$g^{(\tau_1)}\in G^{(\tau_1)}$ from a point $g^{(\tau_2)}\in G^{(\tau_2)}$ 
when $\tau_1\ne \tau_2$ even
if, physically, both points occupy exactly the same position in $\R^d$. 
So, formally,~$g^{(\tau)}$ must be regarded as a point $(g,\tau) \in \R^d \times [t]$.
In the following, however, we will not always `verbally' distinguish between the interpretations
$G^{(\tau)}\subseteq \R^d$ and $G^{(\tau)}\subseteq \R^d\times [t]$ if there is no 
risk of confusion.

In the positionally determined case where the correct sets
$F^{(\tau)}$ are known, we can, of course, directly work with $F^{(\tau)}.$ 
Hence, whenever we speak of the positionally determined case, it is in the following
tacitly assumed that $G^{(\tau)}=F^{(\tau)}.$

\subsection{A Markov-type integer programming model}\label{sect:Markov}

In this section we will consider the problem~{\sc TomTrac}${(\CO; S_1,\ldots ,S_m)}$ for  
objective function oracles $\CO$ that can be given explicitly since their values 
are just the sums of all costs of assigning points between consecutive moments in time.
We call these models {\em Markov-type}  since the objective function reflects 
only dependencies that occur between neighboring layers. In Section
\ref{sect:nonmarkov} we will discuss more general models.

Let us now present an integer linear programming ({\sc Ilp}) model for this
problem. 
In order to make the notation as transparent as possible we number the
points of each grid $G^{(\tau)}$ as $g_i^{(\tau)}$ for $i\in  I^{(\tau)}=[\,|G^{(\tau)}|\,]$. 
Also, we set $e_{i,j}^{(\tau)}=(g_i^{(\tau)}, g_j^{(\tau+1)})$, and refer to such 
pairs as {\em tracking edges}.
The objective function will then depend only on weights $\omega_{i,j}^{(\tau)},$
associated with the tracking edges, which measure the
cost of assigning a point $g_i^{(\tau)}\in G^{(\tau)}$ to a point 
$g_j^{(\tau+1)}\in G^{(\tau+1)}$. 
Explicitly expressed, we have
\[
c(\CP_1,\ldots,\CP_n)= \sum_{k=1}^n \sum_{\tau\in [t]} \sum_{e_{i,j}^{(\tau)}\in \CP_k}
\omega_{i,j}^{(\tau)}.
\]
Setting $w(e)= \omega_{i,j}^{(\tau)}$ for $e= e_{i,j}^{(\tau)}$, we will 
refer to $w$ as the {\em weight function} realizing the Markov-type 
objective function oracle $\CO$. Here, of course, the full list of
all weights $w(e)$ is part of the input of our problem.
Note that the number of entries in this list is bounded from above by $(t-1)n^4$
even in the totally tomographic case. Hence there are at most
polynomially many rational numbers to be presented.
With this specification we refer to our respective problems as {\sc TomTrac}${(S_1,\ldots ,S_m)}$ and 
{\sc Trac}$(d).$

Note that the weights $w(e)$ can be used  to model additional `local' physical knowledge 
such as information about the potential ranges of velocities or directions of 
the particles.

Now, we introduce two sets of $0$-$1$-variables associated with the grid points 
and the tracking edges, respectively.
More precisely, for $\tau \in [t]$ and $i\in  I^{(\tau)},$ the variable $\xi^{(\tau)}_i$ 
corresponds to $g_{i}^{(\tau)}$ while the variable $\eta_{i,j}^{(\tau)}$ 
is associated with $e_{i,j}^{(\tau)}.$ 

The {\em tomographic variables} $\xi^{(\tau)}_i$ will be used to describe 
the tomographic constraints; $\xi^{(\tau)}_i=1$ signifies that the grid point 
$g_{i}^{(\tau)}$ is actually present in the computed solution $F^{(\tau)}.$ 
The {\em tracking variable} $\eta_{i,j}^{(\tau)}$ has value $1$ if, and only if, the particle 
that, at time $\tau$ is located at $g_i^{(\tau)},$ moves to $g_j^{(\tau+1)}.$ 
For a compact notation, we set 
$x^{(\tau)}=(\xi_1^{(\tau)},\dots,\xi_{|G^{(\tau)}|}^{(\tau)})^T,$
write $A^{(\tau)}$ for the corresponding coefficient matrix,  and 
encode the X-ray information in the `right-hand side vector' $b^{(\tau)}.$

With this notation we can in principle solve the problem as an  {\sc Ilp}.

\begin{alg}[{\sc Tomographic Tracking-Ilp}]\label{alg:ILP}
Solve the following integer linear program:
\begin{equation} \label{eq:model1}
\begin{array}{lrclcl}
&\multicolumn{4}{c}{\DS\min \,\, \sum_{\tau\in [t-1]}\sum_{i\in I^{(\tau)}}\sum_{j\in I^{(\tau+1)}} \omega_{i,j}^{(\tau)}\eta_{i,j}^{(\tau)}}&\\[.6cm]
\textnormal{s.\,t. } &A^{(\tau)}x^{(\tau)}      & = & b^{(\tau)}, &\quad &(\tau\in[t]),\\[.2cm]
&\DS \xi_i^{(\tau)}-\sum_{j\in I^{(\tau+1)}}\eta_{i,j}^{(\tau)}&=&0, &&(i\in I^{(\tau)},\:\tau\in[t-1]),\\[.2cm]
&\DS \xi_j^{(\tau+1)}-\sum_{i\in I^{(\tau)}}\eta_{i,j}^{(\tau)}&=&0, &&(j\in I^{(\tau+1)},\:\tau\in[t-1]),\\[.2cm]
&\eta_{i,j}^{(\tau)}&\in& \{0,1\}, &&(i\in I^{(\tau)},\:j\in I^{(\tau+1)},\:\tau\in[t-1]),\\[.2cm]
&         x^{(\tau)}    & \in &\{0,1\}^{|G^{(\tau)}|},&&(\tau\in[t]).\\
\end{array}
\end{equation} 
\end{alg}

Note that the coefficient matrices $A^{(\tau)}$ are totally unimodular (see, e.g.,~\cite{aharoni97,gritzmann97}) and the first set of equality constraints contains only the tomographic 
variables.
The second and third set of equalities in~\eqref{eq:model1} couple
the tomographic and the tracking variables. As the tomographic variables are 
$0$-$1$ they guarantee two properties: (1)~A  tracking
edge can only connect grid points that are present in the tomographic solution.
The second set of constraints corresponds to the edges `leaving' 
time $\tau$ while the third set corresponds to those `entering' time
$\tau+1$. (2)~From each point $g_i^{(\tau)}$ in a considered solution, i.e.,
when~$\xi_i^{(\tau)}=1$, exactly one `leaving' tracking edge is selected.
Similarly, when $\xi_i^{(\tau+1)}=1$, there must be exactly one `entering' 
tracking edge.

Note that  by adding the two constraints
\[
\xi_i^{(\tau+1)}-\sum_{j\in I^{(\tau+1)}}\eta_{j,l}^{(\tau+1)}=0 \qquad \mbox{and}
\quad 
-\xi_i^{(\tau+1)}+\sum_{j\in I^{(\tau)}}\eta_{j,i}^{(\tau)}=0,
\] for $i\in I^{(\tau)}$ and $\tau \in \{2,\dots,t-1\},$
we derive the set of constraints
\begin{equation} \label{eq:match}
\sum_{j\in I^{(\tau)}}\eta_{j,i}^{(\tau)}-\sum_{j\in I^{(\tau+1)}}\eta_{i,j}^{(\tau+1)}=0,
\end{equation}
which express that, if a particle enters $g_i^{(\tau)}$ it must exit it
again, and vice versa. Hence, the equations~\eqref{eq:match} can be viewed 
as {\em path} or {\em flow constraints}, and the corresponding coefficient matrix 
is again totally unimodular. When we replace appropriate conditions in
 \eqref{eq:model1} by these flow constraints, we obtain an equivalent but different 
{\sc Ilp}-formulation which decomposes into two totally unimodular parts and 
a reduced set of coupling constraints which, however, still destroy the 
total unimodularity of the whole system. 
As it will turn out in the next subsection, this is not merely a nuisance but a 
severe obstacle for efficient algorithms.

We prefer the formulation \eqref{eq:model1} because it reveals the 
uncoupled structure in the positionally determined case
and leads to the following result for  {\sc Trac}$(d)$. (Recall that now $\CO$
is a Markov-type objective function oracle which is realized by a rational 
weight function $w$.)

\begin{thm}\label{thm:ILP-posdet}
{\sc Trac}$(d)$ decomposes into uncoupled minimum weight perfect 
bipartite matching problems and can hence be solved in polynomial time.
\end{thm}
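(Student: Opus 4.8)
The plan is to exploit the structure of the {\sc Ilp}~\eqref{eq:model1} in the positionally determined case. The key initial observation is that in {\sc Trac}$(d)$ the sets $F^{(\tau)}$ are part of the input, so there is no tomographic ambiguity: every present grid point must be selected. In terms of~\eqref{eq:model1} this means that all tomographic variables are forced, i.e.\ $\xi_i^{(\tau)}=1$ for all $i\in I^{(\tau)}$ and all $\tau\in[t]$ (formally, with $G^{(\tau)}=F^{(\tau)}$ the all-ones vector is the only $0$-$1$ point satisfying $A^{(\tau)}x^{(\tau)}=b^{(\tau)}$).

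First I would substitute $\xi_i^{(\tau)}=1$ into the two families of coupling constraints. The constraint $\xi_i^{(\tau)}-\sum_{j\in I^{(\tau+1)}}\eta_{i,j}^{(\tau)}=0$ becomes $\sum_{j\in I^{(\tau+1)}}\eta_{i,j}^{(\tau)}=1$, and $\xi_j^{(\tau+1)}-\sum_{i\in I^{(\tau)}}\eta_{i,j}^{(\tau)}=0$ becomes $\sum_{i\in I^{(\tau)}}\eta_{i,j}^{(\tau)}=1$. Together with $\eta_{i,j}^{(\tau)}\in\{0,1\}$, these are exactly the degree constraints of a perfect matching in the complete bipartite graph $K_{n,n}$ on vertex classes $F^{(\tau)}$ and $F^{(\tau+1)}$; such a matching exists because $|F^{(\tau)}|=|F^{(\tau+1)}|=n$.

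Next I would observe that for distinct $\tau$ the variable blocks $\{\eta_{i,j}^{(\tau)}\}$ are disjoint and that each transformed constraint involves the variables of a single $\tau$ only. Since the Markov-type objective $\sum_{\tau\in[t-1]}\sum_{i,j}\omega_{i,j}^{(\tau)}\eta_{i,j}^{(\tau)}$ is likewise separable over $\tau$, the program splits into $t-1$ mutually independent minimum weight perfect bipartite matching problems, one per transition $\tau\to\tau+1$. By separability the concatenation of the individual optimal matchings is optimal for the coupled program and the minimum costs add up; the tracks $\CP_1,\dots,\CP_n$ are then recovered by composing the $t-1$ optimal matchings layer by layer. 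As each matching is computable in polynomial time (e.g.\ by the Hungarian method or as a min-cost flow) and there are only $t-1$ instances of size $O(n^2)$, the total running time is polynomial.

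I do not anticipate a genuine obstacle; the only points requiring care are to confirm that fixing $\xi\equiv 1$ discards no feasible couplings (immediate from $G^{(\tau)}=F^{(\tau)}$) and that separability over $\tau$ is not spoiled by a hidden coupling between consecutive matchings. The latter is precisely what the flow constraints~\eqref{eq:match} would encode in the general tomographic setting, but here they are automatically satisfied — every present point carries exactly one entering and one leaving edge — so they impose no additional linkage, and the decomposition is clean.
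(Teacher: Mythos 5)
Your proposal is correct and takes essentially the same route as the paper's proof: you fix the tomographic variables of \eqref{eq:model1} to one (justified by $G^{(\tau)}=F^{(\tau)}$), turning the coupling constraints into the degree constraints of $t-1$ independent minimum weight perfect bipartite matching problems, whose separable objectives let you solve each transition independently and compose the optimal matchings into the tracks. The only cosmetic difference is in the final step, where the paper argues polynomial-time solvability via the {\sc Lp}-relaxation and the integrality of the vertices of the Birkhoff polytope, while you invoke the Hungarian method or min-cost flow---an equivalent standard fact.
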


\begin{proof} 
Under the assumptions of this theorem,  \eqref{eq:model1} reads

\begin{equation} \label{eq:ILP-posdet}
\begin{array}{lrclcl}
&\multicolumn{4}{c}{\DS\min \,\, \sum_{\tau\in [t-1]}\sum_{i\in I^{(\tau)}}\sum_{j\in I^{(\tau+1)}} \omega_{i,j}^{(\tau)}\eta_{i,j}^{(\tau)}}&\\[.6cm]
\textnormal{s.\,t. }&\DS \sum_{j\in I^{(\tau+1)}}\eta_{i,j}^{(\tau)}&=&1, &&(i\in I^{(\tau)},\:\tau\in[t-1]),\\[.2cm]
&\DS \sum_{i\in I^{(\tau)}}\eta_{i,j}^{(\tau)}&=&1, &&(j\in I^{(\tau+1)},\:\tau\in[t-1]),\\[.2cm]
&\eta_{i,j}^{(\tau)}&\in& \{0,1\}, &&(i\in I^{(\tau)},\:j\in I^{(\tau+1)},\:\tau\in[t-1]).\\
\end{array}
\end{equation} 

Note that, here, in the positionally determined case, we have
$|I^{(\tau)}|=|G^{(\tau)}|=|F^{(\tau)}|=n$.
Hence we need to solve the $t-1$ independent minimum weight perfect bipartite matching tasks

\begin{equation} \label{eq:ILP-posdet2}
\begin{array}{lrclcl}
&\multicolumn{4}{c}{\DS\min \,\, \sum_{i,j \in [n]} \omega_{i,j}^{(\tau)}\eta_{i,j}^{(\tau)}}&\\[.6cm]
\textnormal{s.\,t. }&\DS \sum_{j\in [n]}\eta_{i,j}^{(\tau)}&=&1, &&(i\in [n]),\\[.2cm]
&\DS \sum_{i\in [n]}\eta_{i,j}^{(\tau)}&=&1, &&(j\in [n]),\\[.2cm]
&\eta_{i,j}^{(\tau)}&\ge& 0, &&(i,j\in [n]).\\
\end{array}
\end{equation} 

In \eqref{eq:ILP-posdet2} the condition $\eta_{i,j}^{(\tau)}\in \{0,1\}$ has been replaced by the
non-negativity constraints $\eta_{i,j}^{(\tau)}\ge 0$. Since, by the other constraints, 
$\eta_{i,j}^{(\tau)}\le 1$, we have just switched to the {\em {\sc Lp}-relaxation}. The feasible points of this {\sc Lp}-relaxation form a polytope with integral vertices \cite{birkhoff} (see also~\cite{gritzmann97}), hence each of these problems can be solved in polynomial time; see, e.g., \cite[Sect.~16~and~19]{schrijver-86}.
\end{proof}

%It should be noted that the above result continues to hold for several other types of objective functions. For instance, a standard reformulation of the LP~\eqref{eq:ILP-posdet2} implies that {\sc Trac}$(d)$ is in~$\mathbb{P}$ if the objective function is given by
%\[c(\CP_1,\ldots,\CP_n)= \sum_{\tau\in [t]} \max_{k\in[n]}\max_{e_{i,j}^{(\tau)}\in \CP_k} \omega_{i,j}^{(\tau)}.\] 

Let us close this section by turning again to the ILP~\eqref{eq:model1}. Clearly, the ILP contains~$O(tn^2)$ constraints and~$O(tn^4)$ binary variables, which, in the positionally determined case, reduces to $O(tn)$ constraints and $O(tn^2)$ binary variables. 
As the computation time depends strongly on the structure of the problem instance, it is not clear which computation times will occur for a given practical instance. For the positionally determined case, however, the computational study in~\cite{steger} shows that random instances with~$n$ up to~$20,000$ and $t=2$ can be solved with state-of-the-art algorithms in reasonable time.  Note, however, that without resorting to sparsity and advanced optimization techniques it seems hopeless to solve instances in the tomographic case with $n=t=30$ as the coefficient matrix can involve~$54,000\times 23,517,000$ entries (amounting to more than 1~Terabyte of storage space). Of course, it is always possible to resort to LP relaxations of~\eqref{eq:model1}, see~\cite{DPS17}. In general, however, the returned solutions will then not be binary.

\subsection{On the complexity of the partially tomographic case}
\label{sect:partially-tomographic}
We consider the question of when, for $m=2$, the tomographic case can be solved 
efficiently. The following result shows the limitations already for the 
following quite restricted partially tomographic case.
There is only one time step, i.e.,~$t=2$, and~$F^{(1)}$ is  
known while the set $F^{(2)}$ of particle positions for $\tau=2$ is only accessible through
its two X-rays $X_{S_1}F^{(2)}$ and $X_{S_2}F^{(2)}$. 

\begin{thm}\label{thm:tomo-matching}
Even if all instances are restricted to the case $t=2$, where the solution 
$F^{(1)}$ is given explicitly, {\sc TomTrac}${(S_1,S_2)}$ is $\NP$-hard.
Also the corresponding uniqueness problem is $\NP$-complete and the counting problem 
is $\#\P$-complete.
\end{thm}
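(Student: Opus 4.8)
The plan is to establish all three claims---$\NP$-hardness, $\NP$-completeness of the uniqueness problem, and $\#\P$-completeness of the counting problem---by a single reduction from \textsc{3-Dimensional Matching} (3DM). Recall that 3DM is $\NP$-complete \cite{gareyjohnson}, that its counting version $\#$3DM is $\#\P$-complete (it contains the counting of perfect matchings in bipartite graphs, i.e. the evaluation of the $0$-$1$ permanent), and that its another-solution problem is $\NP$-complete (a standard consequence of a parsimonious reduction from satisfiability). The guiding observation is that for $t=2$ a coupling is a single bipartite matching; hence even a tomographically \emph{trivial} second layer can be used to push all of the difficulty into the matching, so that the interplay between tomography and tracking alone already produces hardness.

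Concretely, given a 3DM instance with ground sets $X=Y=Z=[q]$ and triples $T\subseteq [q]^{3}$, I would take $d=2$ with $S_1,S_2$ the two coordinate directions, set $t=2$ and $n=q$, and let the X-ray data $f^{(2)}_1,f^{(2)}_2$ be all-ones on $q$ lines in each direction. Then the grid $G^{(2)}$ is the full $q\times q$ array indexed by $X\times Y$, and the tomographic solutions $F^{(2)}$ are exactly the supports of $q\times q$ permutation matrices. For $F^{(1)}$ I take any $q$ distinct rational points, with $g^{(1)}_i$ representing the element $z\in Z$, and I realize the Markov-type oracle $\CO$ by the weight function $\omega^{(1)}_{i,j}=0$ if the cell $g^{(2)}_j=(x,y)$ and the element $z$ represented by $g^{(1)}_i$ satisfy $(x,y,z)\in T$, and $\omega^{(1)}_{i,j}=1$ otherwise. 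The cost threshold is $0$. All data are rational and of polynomial size (there are only $q^{3}$ weights).

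The heart of the argument is to check that cost-$0$ couplings correspond \emph{bijectively} to perfect matchings of the 3DM instance. A cost-$0$ solution consists of a permutation $\sigma\colon X\to Y$ (the support of $F^{(2)}$) together with a bijection $\mu\colon Z\to\{(x,\sigma(x)):x\in X\}$ such that $(x,\sigma(x),z)\in T$ whenever $\mu(z)=(x,\sigma(x))$. Reading off the first coordinate gives a bijection $\beta\colon Z\to X$, and the $q$ triples $(\beta(z),\sigma(\beta(z)),z)$ then cover each element of $X$, $Y$ and $Z$ exactly once and lie in $T$; that is, they form a 3DM matching. Conversely, every 3DM matching determines the pair $(\sigma,\mu)$ uniquely. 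Since the all-ones margins are always consistent, a coupling exists for every instance, and it attains cost $0$ iff the 3DM instance is solvable. This proves $\NP$-hardness; as the decision version is clearly in $\NP$, it is $\NP$-complete.

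Because the above correspondence is a parsimonious bijection, the number of optimal (cost-$0$) couplings equals the number of 3DM matchings. With the evident membership in $\#\P$ this transfers $\#\P$-completeness from $\#$3DM to the counting problem, and likewise a second optimal coupling exists iff the 3DM instance admits a second perfect matching, so deciding uniqueness (equivalently, detecting a second optimum) is $\NP$-complete. The one point that demands care---and the main obstacle---is exactly this parsimony: the weights and the all-ones margins must force the matching to realize a genuine 3DM solution with no spurious optima. This is why 3DM, whose third ``direction'' has all line sums equal to one, is preferable to a general planar three-direction source as in \cite{ggp-99}: simulating a third X-ray by attaching $s_\delta$ first-layer points to the cells of a line of prescribed sum $s_\delta$ still yields $\NP$-hardness, but each admissible $F^{(2)}$ would then admit $\prod_\delta s_\delta!$ equivalent matchings, breaking the bijection needed for the uniqueness and counting statements.
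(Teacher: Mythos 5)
Your proposal is correct, but it takes a genuinely different route from the paper. The paper reduces from the three-direction consistency problem {\sc Consistency}$_{\mathcal{F}^d}(S_1,S_2,S_3)$ of \cite{ggp-99}, exploiting its restricted $\NP$-complete instances in which the nonzero X-rays along $S_2$ and $S_3$ are all $1$ and every $S_3$-line meets the candidate grid in exactly two points: the data $(f_1,f_2)$ serve as the $\tau=2$ tomographic input, $F^{(1)}$ is taken to be $n$ points on a line parallel to $S_1$ (so that $F^{(1)}$ is even uniquely determined by its own two X-rays), and the third direction is simulated by $0$/$1$ edge weights that give $p_j$ weight $0$ precisely on the two candidates of ``its'' $S_3$-line. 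You instead reduce directly from {\sc $3$D-Matching}, making the $\tau=2$ tomography trivial (all-ones margins, so the solutions $F^{(2)}$ are exactly permutation matrices) and encoding the triple set entirely in the coupling weights; your bijection between cost-$0$ couplings and 3D matchings is sound, and parsimony then transfers $\#\P$-completeness from the counting version and the uniqueness claim from the another-solution version of {\sc $3$D-Matching}, just as the paper transfers all three claims parsimoniously from the corresponding versions of {\sc Consistency}. What each buys: your proof is self-contained modulo textbook facts about {\sc $3$D-Matching} and makes the conceptual point starker---the $\tau=2$ tomography alone is easy, so the hardness resides purely in the interplay with the coupling---whereas the paper's route stays within the tomography family, inherits all three complexity statements from one source theorem, and yields the slightly stronger formulation in which even the first layer is uniquely determined by its X-rays. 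Notably, your closing observation about the $\prod_\delta s_\delta!$ spurious matchings that would arise from simulating a general third X-ray is exactly the obstruction the paper sidesteps by using the restricted instances (all simulated line sums $1$, two candidates per line), so in the end both proofs rely on the same parsimony mechanism. Two minor points: the theorem is stated for given lattice lines $S_1,S_2$ while your construction fixes the coordinate directions---an invertible rational linear map carrying $S_1,S_2$ to the axes (such a map takes lattice lines to lattice lines and preserves incidences and cardinalities) closes this in one line; and, as in the paper itself, the claim about the ``uniqueness problem'' should be read in the parsimoniously inherited another-solution/non-uniqueness sense, which is precisely what your argument establishes.
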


\begin{proof}
We use a reduction from the following problem, 
where $S_1,S_2,S_3$ are three different lattice lines. 

\begin{problem}{{\sc Consistency}$_{\mathcal{F}^d}{(S_1,S_2,S_3)}$}
\item[Instance] Data functions $f_1,$ $f_2,$ $f_3.$
\item[Question] Does there exist a set $F\in \CF^d$ such that
$X_{S_i}F=f_i$ for all $i\in [3]$?
\end{problem}

This problem and its uniqueness and counting versions have been shown in 
\cite{ggp-99} to be $\NP$-complete or $\#\P$-complete, respectively. In fact, the proof 
of \cite{ggp-99} reveals, that {\sc Consistency}$_{\mathcal{F}^d}{(S_1,S_2,S_3)}$ 
remains $\NP$-complete even if all instances are restricted to those, where, for two 
directions, say $S_2,S_3$, the nonzero X-rays are all $1$ and for each of 
these X-ray lines the grid $G$ contains exactly two candidate points. 
So, let~$(f_1,f_2,f_3)$ be the data functions of such an instance, 
and let $n=\|f_1\|_{(1)}=\|f_2\|_{(1)}=\|f_3\|_{(1)}$. 

Let $F^{(1)}$ consist of $n$ points $p_1,\dots,p_n$ on a line parallel to $S_1$. Hence the support 
of $X_{S_1}F^{(1)}$ is a single line, and the corresponding value is $n,$ while the 
support of $X_{S_2}F^{(1)}$ consists of $n$ lines, and the corresponding values 
are $1$. Note that $F^{(1)}$ is uniquely determined by its X-rays $X_{S_1}F^{(1)}$ 
and $X_{S_2}F^{(1)}$.

Now, we consider the given (restricted) instance $(f_1,f_2,f_3)$ of 
{\sc Consistency}$_{\mathcal{F}^d}{(S_1,S_2,S_3)}$ as `living' at time $\tau=2$, 
regard the first two data functions as the input $(f_1^{(2)}, f_2^{(2)})$ at $\tau=2$ 
of our dynamic 2-direction X-ray problem, and introduce an objective 
function that allows to model the X-ray information in the third direction via 
a minimum weight bipartite matching of cardinality~$n.$ 

So, let $G^{(2)}$ be the grid coming from the data function $f_1^{(2)}=f_1$ 
and $f_{2}^{(2)}=f_2.$ 
Further, let $T_1,\ldots,T_n$ be the lines parallel to $S_3$ that meet the 
candidate grid $G$ of $(f_1,f_2,f_3),$ and let
$g_{i,j}$, $i=1,2$, denote the corresponding two grid points on $T_j$, $j\in [n]$.
Finally, for $j\in[n]$ and $g\in G^{(2)},$ we define the objective function vector~$w= (\omega_{p_j,g})$ by setting
\[
\omega_{p_j,g}= \begin{cases} 
0 & \mbox{for $g \in \{g_{1,j},g_{2,j}\}$;}\\
1 & \mbox{otherwise.}
\end{cases}
\]
Now consider a minimum weight bipartite matching of cardinality~$n$ on the complete bipartite graph with 
the partition~$(F^{(1)},G^{(2)})$ of the vertex set $F^{(1)}\cup G^{(2)}$ and edge costs $\omega_{p_j,g}$ for any edge $\{p_j,g\}\in F^{(1)}\times G^{(2)}.$  In this matching
 let $F^{(2)}$ denote the endpoints of the tracking edges
that are not in $F^{(1)}.$ Then, of course, each point 
$p_j\in F^{(1)}$ is assigned to exactly one point of~$G^{(2)}.$ 
If the objective function value of the given matching is $0$ then
$p_j$ must be assigned to~$g_{1,j}$ or~$g_{2,j}$ for each $j\in [n].$ Hence $X_{S_3}F^{(2)}=f_3.$

On the other hand, if $F^{(2)}$ satisfies $X_{S_i}F^{(2)}=f_i,$ for $i\in [3],$ then, 
in particular, for each $j\in [n]$, the set $F^{(2)}$ contains exactly one of the 
points $g_{1,j}$ or $g_{2,j}$.
Hence there is a matching of cardinality $n$ with objective function value $0$.

Thus, the given instance $(f_1,f_2,f_3)$ is feasible if, and only if, 
there exist a set $F^{(2)}$ satisfying the X-ray constraints with respect to $S_1$ and $S_2$ that allows a perfect matching between $F^{(1)}$ and $F^{(2)}$ of weight~$0.$
Of course, the transformation runs in polynomial-time and is parsimonious.
\end{proof}

The following corollary is merely a reformulation of Theorem ~\ref{thm:tomo-matching} highlighting its interpretation in terms
of matchings.

\begin{cor}\label{cor:tomo-matching}
{\sc Min weight max cardinality bipartite Matching} under totally unimodular 
constraints for one set of the bipartition is $\NP$-hard.
Also the corresponding uniqueness problem is $\NP$-complete while the counting 
problem is $\#\P$-complete.
\end{cor}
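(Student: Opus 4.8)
The plan is to treat this corollary as what the text already announces it to be --- a reformulation of Theorem~\ref{thm:tomo-matching} --- so the work consists only in recasting the reduction used there in purely matching-theoretic language and checking that it is literally an instance of the abstract problem in the statement. First I would fix a precise formulation of the abstract problem: we are given a bipartite graph on vertex classes $U$ and $V$ with rational edge weights, together with a totally unimodular matrix $A$ and an integral vector $b$; the task is to find a maximum-cardinality matching $M$ of minimum weight subject to the side condition that the characteristic vector $x\in\{0,1\}^{|V|}$ of the set of $V$-vertices covered by $M$ satisfies $Ax=b$. The phrase ``totally unimodular constraints for one set of the bipartition'' refers exactly to this condition imposed on the $V$-side.

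Next I would reinterpret the construction in the proof of Theorem~\ref{thm:tomo-matching} under this template. One sets $U=F^{(1)}$ and $V=G^{(2)}$, uses the edge weights $\omega_{p_j,g}$ defined there (namely $0$ on the two admissible grid points $g_{1,j},g_{2,j}$ of each $S_3$-line and $1$ otherwise), and takes $A=A^{(2)}$, $b=b^{(2)}$ to be the coefficient matrix and right-hand side encoding $X_{S_1}F^{(2)}=f_1$ and $X_{S_2}F^{(2)}=f_2$. The crucial structural fact, already recorded earlier in the excerpt, is that the two-direction X-ray matrix $A^{(2)}$ is totally unimodular; this is precisely what qualifies the side condition as a totally unimodular constraint.

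It then remains to invoke the equivalence already established. Since $\|f_1\|_{(1)}=n=|U|$, every feasible selection of covered $V$-vertices has exactly $n$ elements, so a maximum-cardinality matching covers all of $U$ and has cardinality $n$; and by the definition of the weights it has weight $0$ if and only if each $p_j$ is matched to $g_{1,j}$ or $g_{2,j}$, i.e. if and only if $X_{S_3}F^{(2)}=f_3$. Hence the instance of {\sc Consistency}$_{\mathcal{F}^d}(S_1,S_2,S_3)$ is feasible if and only if the constructed matching instance admits a maximum-cardinality matching of weight $0$. Because {\sc Consistency} is $\NP$-complete and the transformation is parsimonious, $\NP$-hardness of the search problem, $\NP$-completeness of the uniqueness problem, and $\#\P$-completeness of the counting problem all carry over immediately.

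The only genuinely delicate point --- and it is bookkeeping rather than mathematics --- is to state the abstract problem so that the totally unimodular constraint is correctly understood as a constraint on the covered-vertex indicator $x$ rather than on the matching variables $\eta$ directly; the coupling between the $\xi$- and $\eta$-variables in~\eqref{eq:model1} is exactly what makes this identification legitimate. Once this is made explicit, no further argument is needed beyond citing Theorem~\ref{thm:tomo-matching}.
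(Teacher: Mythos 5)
Your proposal is correct and takes essentially the same route as the paper, which proves the corollary simply by observing that it is a reformulation of Theorem~\ref{thm:tomo-matching}: the reduction there is exactly the matching instance you describe, with the totally unimodular two-direction X-ray system $A^{(2)}x=b^{(2)}$ constraining the covered vertices on the $G^{(2)}$-side and the weight-$0$ condition encoding the third X-ray. Your explicit statement of the abstract problem and the parsimonious transfer of the uniqueness and counting claims merely spell out what the paper leaves implicit.
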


It is clear that Theorem \ref{thm:tomo-matching} can also be interpreted as 
a non-approximability result. In fact, since, in the case of feasibility, 
the optimal objective function value is $0$, the $\NP$-hardness of the
problem trivially means that, unless $\P=\NP$, there is no polynomial-time 
approximation algorithm of relative accuracy at most $\alpha$ for any 
factor $\alpha$. Note, however, that the definition of the objective function 
in the proof of Theorem \ref{thm:tomo-matching} can be altered in order to 
guarantee that the objective function is bounded away from~$0.$ 
Simply set $\omega_{p_j,g}=1$ for $j\in[n]$ and $g \in \{g_{1,j},g_{2,j}\},$ and let all other values 
be sufficiently large. This fact can be used to interpret the previous result
in the context of approximation complexity even if we assume that the optimum 
is bounded away from $0$ from below. 

\begin{cor}\label{cor:tomo-approx}
Unless $\P=\NP$, there is no polynomial-time algorithm that solves the 
following problem up to any polynomial-space multiplicative constant in 
polynomial-time.
Given $F^{(1)}$, $f_1^{(2)},$ $f_2^{(2)},$ let $G^{(2)}$ be the grid associated with 
$f_1^{(2)},$ $f_2^{(2)},$ and let $w\ge 1$  encode the edge weights of the complete
bipartite graph with bipartition $(F^{(1)},G^{(2)}).$ Find a cardinality $|F^{(1)}|$ 
matching of $F^{(1)}$ and a set $F^{(2)}\subseteq G^{(2)}$ with
$X_{S_1}F^{(2)}=f_1^{(2)},$ $X_{S_2}F^{(2)}=f_2^{(2)},$ that is minimal 
with respect to $w$ for all such sets $F^{(2)}.$
\end{cor}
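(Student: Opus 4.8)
The plan is to recycle the reduction from the proof of Theorem~\ref{thm:tomo-matching}, merely replacing the cost function by the gap-amplifying variant already indicated in the remark preceding the corollary. Concretely, I would keep the instance $(F^{(1)}, f_1^{(2)}, f_2^{(2)})$ built from a restricted {\sc Consistency}$_{\mathcal{F}^d}(S_1,S_2,S_3)$ instance verbatim, but set $\omega_{p_j,g}=1$ whenever $g\in\{g_{1,j},g_{2,j}\}$ and $\omega_{p_j,g}=M$ otherwise, for a parameter $M$ to be fixed below. Since all weights are now at least $1$, every cardinality-$n$ matching has weight at least $n$, so the optimum is bounded away from $0$; this is exactly what distinguishes the present statement from the trivial non-approximability available for the original $0/1$ weights.

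Next I would establish the promised gap between feasible and infeasible {\sc Consistency} instances. If the instance is feasible, a full three-direction solution $F^{(2)}$ exists; it contains exactly one of $g_{1,j},g_{2,j}$ for each $j$, and matching $p_j$ to that point yields a weight-$n$ matching, so the optimum equals $n$. For the converse --- the crucial soundness direction --- I would argue that in the infeasible case every feasible pair (a valid $F^{(2)}$ for $S_1,S_2$ together with a cardinality-$n$ matching) must use at least one weight-$M$ edge: a matching using only weight-$1$ edges would send each $p_j$ to a point of its own pair, so its image would be a set containing exactly one point on each $S_3$-line, hence a set satisfying $X_{S_3}F^{(2)}=f_3$ in addition to the two given X-rays, contradicting infeasibility. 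Consequently the optimum is at least $(n-1)+M$ in the infeasible case.

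Finally I would amplify. Given any target factor $\alpha$ encodable in polynomial space, choose $M=\alpha n$ (whose binary encoding is polynomial in those of $\alpha$ and $n$), so that the infeasible optimum $(n-1)+M$ strictly exceeds $\alpha n$, i.e.\ strictly exceeds $\alpha$ times the feasible optimum $n$. Any polynomial-time algorithm returning a value within the factor $\alpha$ of the optimum would therefore return at most $\alpha n$ on feasible instances and more than $\alpha n$ on infeasible ones, thereby deciding {\sc Consistency}$_{\mathcal{F}^d}(S_1,S_2,S_3)$ in polynomial time. As this problem is $\NP$-hard and the whole transformation stays polynomial, no such algorithm can exist unless $\P=\NP$. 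I expect the only delicate point to be the soundness step of the second paragraph --- making precise that an all-weight-$1$ matching forces a genuine three-direction tomographic solution --- which hinges on the restricted instance structure (all nonzero $S_3$-X-rays equal to $1$, with exactly two grid candidates per $S_3$-line) inherited from~\cite{ggp-99}; the gap amplification itself is then routine.
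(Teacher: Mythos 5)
Your proposal is correct and takes essentially the same route as the paper: the paper justifies this corollary via the remark preceding it, reusing the reduction of Theorem~\ref{thm:tomo-matching} with weight $1$ on the two pair edges of each $p_j$ and all other weights ``sufficiently large,'' which you simply instantiate as $M$ and carry through the gap argument explicitly, including the soundness step that an all-weight-$1$ perfect matching forces exactly one point per $S_3$-line and hence a genuine three-direction solution. The only (trivial) caveat is that your choice $M=\alpha n$ gives a strict gap $(n-1)+M>\alpha n$ only for $n\ge 2$; for $n=1$ take, e.g., $M=\alpha n+1$ (or note such instances are decidable directly), which is exactly the freedom the paper leaves open with ``sufficiently large.''
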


In view of the comments after Corollary \ref{cor:tomo-matching}, 
this result may seem somewhat artificial. It does, however, allow for 
non approximability results that are rather close to the observed
practical difficulty. In fact, the objective function in the proof of 
Theorem~\ref{thm:tomo-matching}, and hence for obtaining the result of
Corollary~\ref{cor:tomo-approx}, can be adapted so as to 
accommodate different experimental settings. 
This can be combined with different choices of $F^{(1)}$.
For instance, if we choose the points $p_j\in F^{(1)}$ in such a way that they 
have equal Euclidean distance to $g_{1,j}$ and $g_{2,j}$ 
but different distance to all other grid points, then the objective function can be
interpreted as rewarding conformity with certain known velocities of the particles. 
Hence the hardness proof does to a certain extent reflect practical 
difficulties in tracking the particles.

We remark that, in contrast to Corollary~\ref{cor:tomo-approx}, there are several approximability results for the positionally determined case and specific types of cost functions, see, e.g., \cite{multitracking5, multitracking11, bottleneckmatching, multitracking3}.

\subsection{On the complexity of the tomographic case when the displacement field is known}
\label{sect:tomographic}
Let us now turn to the other extreme, assuming that the $F^{(1)},\dots,F^{(t)}$ are unknown, but the \emph{particle displacement field} is given for all $\tau\in[t].$ 
%Recall that formally a \emph{particle displacement field} (for the time step $\tau\to\tau+1$) is a pair $(p,\varphi^{(\tau)}(p)-p)$ with~$\varphi^{(\tau)}:\Q^d\to\Q^d$ denoting a map where~$\varphi^{(\tau)}(p)-p$ denotes the displacement vector for any particle had it been located at position $p$ at time~$\tau.$ In particular, if $\pi^{(\tau)}(p_i)$ denotes the position of the $i$-th particle at time~$\tau,$ then $\pi^{(\tau+1)}(p_i)=\varphi(\pi^{(\tau)}(p_i)),$ for  $i\in[n]$ and $\tau\in[t-1].$ 
Note that this is quite different from the setting considered in the proof of Theorem~\ref{thm:tomo-matching}. In fact, in the proof of Theorem~\ref{thm:tomo-matching} we make explicit use of the fact that there are two possible next positions for each particle, which renders the problem $\NP$-hard. If, on the other hand,~$F^{(1)}$ and the displacement field are known, then $F^{(2)},\dots,F^{(t)}$ and the particle tracks can be determined in polynomial time. In general, however, we will show that the reconstruction problem with a given displacement field is again $\NP$-hard.

Particle displacement fields are often (approximatively) known in experimentally controlled environments (for instance, when charged particles in a particle accelerator or plasma fusion device move in the presence of experimentally controlled electromagnetic fields; see, e.g.,~\cite{plasmaphysicsbook, acceleratorbook}). The particle tracking task reduces in this case to that of reconstructing the particle positions from the measurements. Of course, the general aim is to reduce ambiguities by taking several time steps into account. For instance, in~\cite{DPS17} this problem is considered for~$t=2$ in the context of \emph{compressed motion sensing}.  Our next theorem shows, however, that, unless $\mathbb{P}=\mathbb{N}\mathbb{P},$ there is generally no algorithm that can make efficient use of the additional data.

In order to state the theorem we introduce the following notion. Given $S_1,S_2\in\mathcal{S}^d,$ we call a displacement field $(p,\varphi(p)-p),$ $p\in\Q^d,$ \emph{proper (for $S_1$ and $S_2$)} if $\varphi$ is an affine transformation with $\varphi(\Q^d)\subseteq \Q^d$ and $\varphi(S_i)-\varphi(0)\not\in\{S_1,S_2\},$ $i=1,2.$ Note that, in particular, this notion of properness encompasses several different types of circular displacements; for instance, $\varphi$ might represent a $45^\circ$-rotation in the~$(S_1,S_2)$-plane that is followed by a factor $2/\sqrt{2}$ dilation.

\begin{thm}\label{thm:tomo-circular}
Let $\varphi$ denote the affine transformation (encoded by a rational matrix and a rational translation 
vector) of a proper displacement field. Then, the problem  {\sc TomDisplaceTrac}${(\CO; S_1,S_2)}$
 is $\NP$-hard, even if all instances are restricted to the case~$t=2$.  
Also the corresponding uniqueness problem is $\NP$-hard and the counting problem 
is $\#\P$-hard.
\end{thm}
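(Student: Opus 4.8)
The plan is to reduce from the $\NP$-complete consistency problem
{\sc Consistency}$_{\mathcal{F}^d}(S_1,S_2,S_3)$, exactly as in the proof of
Theorem~\ref{thm:tomo-matching}, but now encode the third direction~$S_3$
through the displacement field~$\varphi$ rather than through the objective
function oracle. Given a restricted instance $(f_1,f_2,f_3)$ of
{\sc Consistency}, I will set up the time step $\tau=1\to\tau=2$ so that the
tomographic data at $\tau=1$ correspond to $(f_1,f_2)$, so that $F^{(1)}$ must
satisfy $X_{S_1}F^{(1)}=f_1$ and $X_{S_2}F^{(1)}=f_2$, and so that the affine
map~$\varphi$ carries the $S_3$-constraint into a tomographic constraint at
$\tau=2$. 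The central idea is: since $\varphi$ is affine and proper, it maps
lines parallel to $S_3$ to a family of parallel lines; if we choose $\varphi$
so that these image lines are parallel to $S_2$ (say), then the requirement
$X_{S_3}F^{(1)}=f_3$ becomes the requirement that $\varphi(F^{(1)})=F^{(2)}$
has a prescribed $S_2$-type X-ray at $\tau=2$, i.e.\ a genuine tomographic
constraint that $\varphi$-compatibility is forced to respect. Feasibility of
{\sc TomDisplaceTrac}$(\CO;S_1,S_2)$ for the constructed instance then holds
if and only if the original instance $(f_1,f_2,f_3)$ is consistent.

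The key steps, in order, are as follows. First I will fix a concrete proper
$\varphi$ witnessing the construction; the example flagged in the statement, a
$45^\circ$ rotation in the $(S_1,S_2)$-plane composed with a $2/\sqrt2$
dilation, is the natural candidate because it sends the coordinate
directions to lines that are $\not\in\{S_1,S_2\}$ while mapping some
convenient auxiliary direction back onto $S_1$ or $S_2$; I will choose
$S_1,S_2,S_3$ (up to the unimodular relabelling that discrete tomography
permits) so that $\varphi(S_3)$ lands parallel to one of $S_1,S_2$. Second, I
will verify that~$\varphi$ maps the rational grid into itself and that the
image X-ray function at $\tau=2$ is computable in polynomial time from $f_3$,
so that the data functions $(f_1^{(2)},f_2^{(2)})$ and $\varphi$ constitute a
legitimate polynomial-size instance. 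Third, I will prove the two directions of
the reduction: a consistent $F$ yields a $\varphi$-compatible pair
$(F^{(1)},F^{(2)})=(F,\varphi(F))$ satisfying all four X-ray constraints, and
conversely any $\varphi$-compatible feasible pair yields a set $F=F^{(1)}$ with
$X_{S_1}F=f_1$, $X_{S_2}F=f_2$ and, by pulling the $\tau=2$ constraint back
through~$\varphi$, also $X_{S_3}F=f_3$. Since the reduction preserves the number
of solutions, the uniqueness and counting claims ($\NP$-hardness and
$\#\P$-hardness) follow from the corresponding hardness of the consistency,
uniqueness, and counting versions of {\sc Consistency} established
in~\cite{ggp-99}, using that the bijection $F\mapsto(F,\varphi(F))$ is
parsimonious.

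The main obstacle I anticipate is the geometric bookkeeping in the first two
steps: I must choose $\varphi$ and the three directions simultaneously so that
(a)~$\varphi$ is affine, rational-grid-preserving, and \emph{proper} in the
precise sense that $\varphi(S_i)-\varphi(0)\not\in\{S_1,S_2\}$ for $i=1,2$, and
(b)~$\varphi$ nevertheless converts the $S_3$-ray data into $S_1$- or
$S_2$-ray data at the next time step. Reconciling these is delicate because
properness forbids $\varphi$ from fixing the measured directions, yet the
reduction needs $\varphi$ to align $S_3$ with a measured direction; the
resolution is that $S_3$ is an \emph{unmeasured} auxiliary direction, so there
is no conflict, but one must check the rationality of $\varphi$ (the
$2/\sqrt2=\sqrt2$ scaling must combine with the rotation to yield rational
matrix entries, which it does since a $45^\circ$ rotation scaled by $\sqrt2$
has entries in $\{\pm1\}$) and confirm that the induced map on the affine
line-families $\CA(S_3)\to\CA(S_i)$ is a bijection that transports $f_3$ to a
valid data function of polynomial size. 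Once this concrete $\varphi$ is
pinned down, the equivalence of feasibility and the parsimony of the
transformation are routine.
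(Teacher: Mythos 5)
Your reduction has a genuine gap: it accounts for only one of the two X-ray constraints at time $\tau=2$. An instance of {\sc TomDisplaceTrac}${(\CO; S_1,S_2)}$ must specify data functions in \emph{both} directions $S_1$ and $S_2$ at $\tau=2$, and via $\varphi$-compatibility each of these pulls back through the displacement to a constraint on $F^{(1)}$ along $\varphi^{-1}(S_i)-\varphi^{-1}(0)$, $i=1,2$. Properness forces \emph{both} pullback directions to lie outside $\{S_1,S_2\}$ (for instance, $\varphi^{-1}(S_1)-\varphi^{-1}(0)=S_2$ would give $\varphi(S_2)-\varphi(0)=S_1$, contradicting properness), and they are distinct from each other since the linear part of $\varphi$ is invertible. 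So $F^{(1)}$ is constrained along \emph{four} distinct directions: $S_1$, $S_2$, the one you align with $S_3$, and a fourth direction whose data function a reduction from three-direction {\sc Consistency} has no way to supply --- it is not determined by $(f_1,f_2,f_3)$, and there is no ``don't care'' X-ray, since any data function constrains the set. Concretely, with your $\varphi$ having linear part $\bigl(\begin{smallmatrix}1&-1\\1&1\end{smallmatrix}\bigr)$, aligning the diagonal $S_3$ with a measured direction still leaves the other time-2 measurement pulling back to the antidiagonal, and you cannot fill in that function. The paper closes exactly this hole by reducing from the \emph{four}-direction problem {\sc Consistency}$_{\mathcal{F}^d}{(S_1,S_2,S_3,S_4)}$, which \cite{ggp-99} shows is $\NP$-complete (with $\#\P$-complete counting version) for \emph{any} four distinct lattice lines; it sets $S_3=\varphi(S_1)-\varphi(0)$ and $S_4=\varphi(S_2)-\varphi(0)$, so both time-2 constraints are consumed by given data, and the identification $x^{(1)}=x^{(2)}$ makes the transformation parsimonious.

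A second, related defect concerns quantification: the theorem asserts hardness for an \emph{arbitrary given} proper $\varphi$ (the displacement field is a parameter of the statement), whereas you propose to ``fix a concrete proper $\varphi$ witnessing the construction.'' That would only prove the weaker existential claim that \emph{some} proper displacement field makes the problem hard. The paper's argument resolves both issues at once: because the \cite{ggp-99} hardness holds for every choice of four distinct directions, it can take $\varphi$ as given and \emph{define} $S_3,S_4$ from it, rather than engineering $\varphi$ to align a prechosen $S_3$ with a measured direction. Your rationality and parsimony checks are fine as far as they go, but they do not repair the missing fourth constraint; once you add it, you are forced into a four-direction reduction, i.e., essentially the paper's proof.
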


\begin{proof}
We construct a parsimonious transformation from {\sc Consistency}$_{\mathcal{F}^d}{(S_1,S_2,S_3,S_4)}$  for certain lattice lines $S_1,S_2,S_3,S_4$.
This problem is the analogue to {\sc Consistency}$_{\mathcal{F}^d}{(S_1,S_2,S_3)}$  and involves four 
given data functions $f_1,$ $f_2,$ $f_3,$ $f_4.$ The task is to decide whether there exist a 
set~$F\in \CF^d$ such that $X_{S_i}F=f_i$ for all $i\in [4].$

Again, this problem and its uniqueness and counting versions have been shown in 
\cite{ggp-99} to be $\NP$-complete or $\#\P$-complete, respectively
for {\em any} four different different lattice lines $S_1,S_2,S_3,S_4$. 

Now, let $S_1,S_2\in\mathcal{S}^d$ with  $S_1\neq S_2,$, and set
$S_3=\varphi(S_1)-\varphi(0)$ and $S_4=\varphi(S_2)-\varphi(0)$. Since the given displacement field is
proper, $S_1,S_2,S_3,S_4$ are four different lattice lines. Hence, the hardness results of \cite{ggp-99} 
do apply to {\sc Consistency}$_{\mathcal{F}^d}{(S_1,S_2,S_3,S_4)}$. So, suppose, we are
given an instance of this problem by means of the X-ray functions $(f_1,f_2,f_3,f_4)$, 
and let $G$ be the corresponding candidate grid.

By associating a binary variable with each point $g_1,\dots,g_{|G|}$ of~$G$ we can formulate the task as deciding feasibility of the ILP
\[\begin{array}{lll}
Ax&=&b\\
A'x&=&b'\\
x&\in&\{0,1\}^{|G|},
\end{array}
\] where $A$ and $A'$ encode the point-line incidences along $\{S_1,S_2\}$ and $\{S_3,S_4\},$ respectively, while~$b$ and~$b'$ encode the data functions~$f_1,f_2,$  and $f_3,f_4,$ respectively. 

Setting $A^{(1)}=A,$ $A^{(2)}=A',$ $b^{(1)}=b,$ $b^{(2)}=b',$ this problem is, of course, equivalent to the ILP
\begin{equation}\label{eq:track}
\begin{array}{rcl}
A^{(1)}x^{(1)}  & = & b^{(1)},\\
A^{(2)}x^{(2)}      & = & b^{(2)},\\
x^{(1)}- x^{(2)}  & = & 0,\\
         x^{(1)}  & \in &\{0,1\}^{|G|},\\
         x^{(2)}    & \in &\{0,1\}^{|G|}.
\end{array}
\end{equation}  Now, suppose we associate with each variable $\xi_i^{(2)},$ $i\in[|G|],$ the point $\varphi^{-1}(g_i)\in\Q^d.$ This does not change the ILP, but only its geometric interpretation. The problem $A^{(2)}x^{(2)}=b^{(2)}$ can now be viewed as to involve only X-rays taken from the directions 
\[
\varphi^{-1}(S_{i+2})-\varphi^{-1}(0)= \varphi^{-1}(\varphi(S_i)-\varphi(0))-\varphi^{-1}(0)=S_i, \qquad 
(i\in [2]).
\] 

Consider now the ILP~\eqref{eq:model1} for $t=2$ and the following settings. Let $G^{(1)}$ denote the grid defined by the data functions $f_1,f_2$ (along~$S_1,S_2$), and let $G^{(2)}=\varphi^{-1}(G')$ where $G'$ denotes the grid defined by the data functions $f_3,f_4$ (along~$S_3,S_4$).  Let the points of each grid $G^{(\tau)}$ be labeled $g_i^{(\tau)}$ for $i\in I^{(\tau)}=[|G^{(\tau)}|],$ $\tau\in[t],$ and set
\[
\omega_{i,j}^{(1)}=\left\{\begin{array}{lll}0 && \textnormal{for } \varphi(g_j^{(2)})=g_i^{(1)}\in G, \\ 1&& \textnormal{otherwise,}\end{array} \right.
\] for every $(i,j)\in I^{(1)} \times I^{(2)}.$ 
 
An optimal solution to this ILP~\eqref{eq:model1} has the objective function value~0 if, and only if, the ILP~\eqref{eq:track} is feasible. As the transformation runs in polynomial-time and is parsimonious, we have thus completed the proof of this theorem.
\end{proof}

While Theorem~\ref{thm:tomo-circular} shows the general intractability, there are specific types of displacements for which the corresponding problem {\sc TomTrac}${(S_1,S_2)}$ turns out to be in $\mathbb{P}.$ This is trivially the case if the particle displacement is known to be zero. Examples, some non trivial, will be given in Theorem~\ref{thm:orth} and~\ref{thm:comb5}. 

\subsection{A rolling horizon model}\label{subsec-rolling-horizion}
Even though Theorem \ref{thm:tomo-matching} was not available then, in 
retrospective the rolling horizon approach of \cite{agms-15} for  
dynamic discrete tomography can be viewed as an attempt to `bypass' the 
$\NP$-hardness result of Theorem~\ref{thm:tomo-matching}. 

In fact, the core of \cite{agms-15}'s polynomial-time algorithm is based 
on a different modeling for the case $t=2$. Unlike \eqref{eq:model1}, the quality of the
tracking is not encoded in terms of weights on the tracking edges, 
but rather through weights on the grid points. 

Before we analyze the price that has to be paid for such a simplification let us
recall the relevant details of the rolling horizon algorithm of \cite{agms-15}
as a service to the reader.

The time step from $\tau$ to $\tau+1$ is modeled as a linear program.
It is based on the knowledge of a solution~$F^{(\tau)}$, hence 
in effect dealing with the partially tomographic case. 
The constraints encode the X-rays provided by the data functions
$f_1^{(\tau+1)}$, $f_2^{(\tau+1)}$. The variables correspond to the points in the 
grid $G^{(\tau+1)}$ and are again collected in a vector $x^{(\tau+1)}$.
The X-ray information is encoded by means of a totally unimodular matrix
$A^{(\tau+1)}$ and a right-hand side vector $b^{(\tau+1)}$.

Further, each point~$g_i^{(\tau+1)} \in G^{(\tau+1)}$ carries a weight 
$\alpha_{i}^{(\tau+1)}$ which reflects the `distance' to a closest 
point $g_i^{(\tau)} \in F^{(\tau)}$ (which is a likely `predecessor'). 
Let  $a^{(\tau+1)}$ be the corresponding weight vector. 
Various weight functions are discussed in \cite{agms-15}. For instance,
if the particles move slowly, a reasonable choice for $a^{(\tau)}$ is
(a finite precision approximation of)
\[
\alpha_{i}^{(\tau)}=\min_{j:\xi^{(\tau)}_{j}=1}\bigl\{\|g_{i}^{(\tau+1)}-g_{j}^{(\tau)}\|\bigr\},
\]
where again $\|\cdot\|$ denotes some suitable norm. 
Clearly, this choice of weights prefers positions of particles
close to those from the previous time step.

The algorithm {\sc Rolling Horizon Tomography} is now as follows:

\begin{alg}[{\sc Rolling Horizon Tomography}]\label{alg:rolling-horizon}
Given $F^{(1)}$ perform successively for $\tau =1,\dots,t-1$
\begin{equation}
\begin{array}{lrcl}
&\multicolumn{3}{c}{\min \,\,\bigl(a^{(\tau+1)}\bigr)^T x^{(\tau+1)}}\\[.1cm]
\textnormal{s.\,t. }&A^{(\tau+1)}x^{(\tau+1)} & = & b^{(\tau+1)},\\
&       x^{(\tau+1)} & \le &\1,\\
&       x^{(\tau+1)} & \ge & 0,\\
\end{array}
\label{eq:LP}
\end{equation} 
to determine $F^{(\tau+1)}$ (corresponding to the incidence vector $0$-$1$ of a 
basic feasible solution of the linear program).
Finally the paths $\CP_1,\ldots,\CP_n$ are obtained by a routine
 that computes a perfect bipartite matching in the graph with vertices $F^{(\tau)}\cup F^{(\tau + 1)}$ and 
edges corresponding to the pairs of vertices which realize the distances~$\alpha_{i}^{(\tau)}$.
\end{alg}

The rationale behind the rolling horizon model seems quite similar to 
that analyzed in Theorem \ref{thm:tomo-matching}. The two models are,
however, fundamentally different. This is obvious from the fact
that the former can be handled in polynomial time while the latter leads
to an $\NP$-hard problem. So, why are we not fully satisfied with 
{\sc Rolling Horizon Tomography}?
After all, it is exact in the sense that it is guaranteed to yield a solution 
which matches the data. It also allows to incorporate physical knowledge and 
is reported to work quite well in practice (see \cite{agms-15,glidingarc-15}). 
There is, however, a price to pay. In fact, as the following example shows,
this algorithm is only a heuristic. In general, one cannot expect to obtain 
the most realistic paths that way. The reason is that the weights used
to measure the quality of an assignment do not incorporate the requirement 
that no two particles can have originated from the same location at the previous moment in time.
Of course, Theorem \ref{thm:tomo-matching} shows that, unless $\P=\NP$, 
there will never be a full remedy, i.e., there does not exist a polynomial-time 
algorithm that is also exact in that second sense even in this restricted case.

\begin{ex}\label{ex:grid1}
Figure \ref{fig:grid1} shows an example of $n=3$ points at time $\tau=1$ in the plane
(black dots) and the grid obtained from the X-ray 
measurements in the coordinate directions at time $\tau=2$. The weight
of each of the nine grid points is given by its Euclidean distance to the 
closest black point.

\begin{figure}[htb]
\begin{center}
\subfigure[]{\includegraphics[width = .3\textwidth]{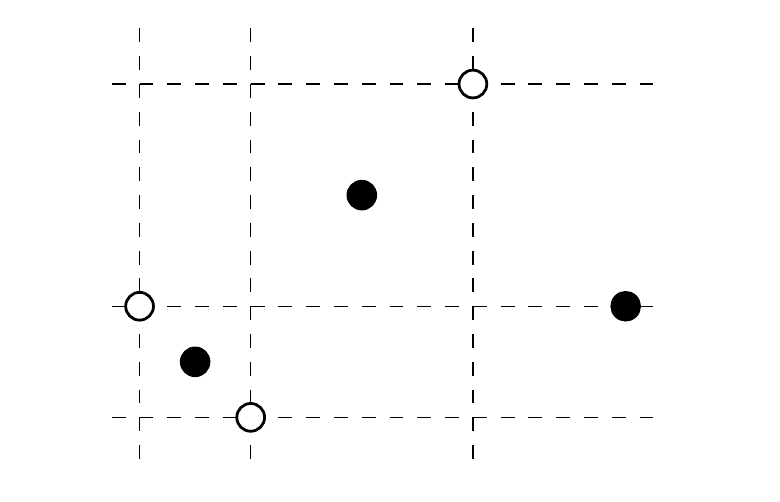}}\hspace*{15ex}
\subfigure[]{\includegraphics[width = .3\textwidth]{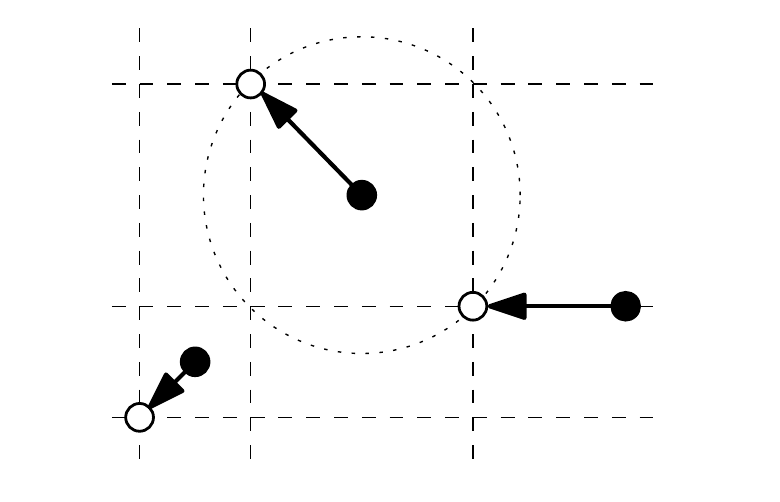}}
\caption{The black dots indicate the solution at time $\tau=1$.
The white dots in~(a)~give a solution at time $\tau=2$ found by
{\sc Rolling Horizon Tomography}, i.e., with the LP \eqref{eq:LP},
while the white dots in~(b) show the solution at time $\tau=2$ that is closest to that
at time $\tau=1$ (the arrows indicate the displacement, 
each within a radius equal to that of the dotted circle). }
\label{fig:grid1}
\end{center}
\end{figure} 

The white dots on the left give a solution $F^{(2)}$ obtained by the rolling horizon~{\sc Lp}~\eqref{eq:LP} for these weights. The best (and quite different) assignment is
depicted on the right. The reason for the failure of the algorithm is that
the weights assigned to the four grid points~$g_1,g_2,g_3,g_4$ 
in the left lower corner all stem from just one black point. 
Since the top black point has minimal distance to four grid points (three of which 
are different from the $g_i$), any minimal solution of \eqref{eq:LP} does 
contain two of the grid points $g_1,g_2,g_3,g_4$. Obviously,
the particle paths indicated in the right-hand side figure are shorter.
\end{ex}

\subsection{Models involving particle history}\label{sect:nonmarkov}
The models in the previous subsections are  
{\em memory-less} in the sense that the estimation of where a particle $p\in P$ 
should move in the time step from $\tau$ to $\tau+1$ depend only 
on its position at time $\tau$ but not on other, earlier positions.

In view of the tractability and intractability results of 
Theorem \ref{thm:ILP-posdet} and Theorem \ref{thm:tomo-matching}, 
respectively, let us first consider the coupling in the positionally determined case
for $t\ge 3.$  

Of course, if we are interested in particle tracks $\CP_1,\ldots, \CP_n$
whose total length 
$$
c(\CP_1,\ldots, \CP_n)= \sum_{i\in [n]} \sum_{\tau\in [t]} \sum_{e\in \CP_i} w(e) 
$$
is minimal, we can find one in polynomial time by Theorem \ref{thm:ILP-posdet}.
The following example shows, however, that such an optimal 
assignment of positions to particles for every moment in time does not necessarily 
guarantee `overall reasonable' particle tracks even if there are only two time steps involved.

\begin{ex}\label{ex:nohistory}
Figure~\ref{fig:nohistory} shows for the positionally determined case two particles (black dots) at times~$\tau=1,2,3$ (dashed lines). Figure~\ref{fig:nohistory} (a)
is obtained by assigning each of the two points at time $\tau$ to their
nearest neighbor at time $\tau+1$, for $\tau=1,2$.
This is optimal if we are aiming at paths of shortest total length but leads 
to a kink at~$\tau=2.$ 
Figure \ref{fig:nohistory} (b) depicts slightly longer paths, which, on the other hand,
correspond to movements along straight lines.

\begin{figure}[htb]
\begin{center}
\subfigure[]{\includegraphics[width = .3\textwidth]{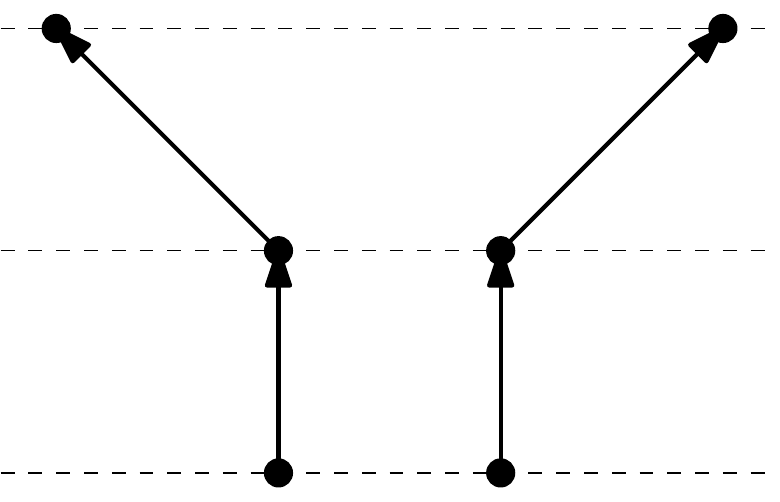}}\hspace*{15ex}
\subfigure[]{\includegraphics[width = .3\textwidth]{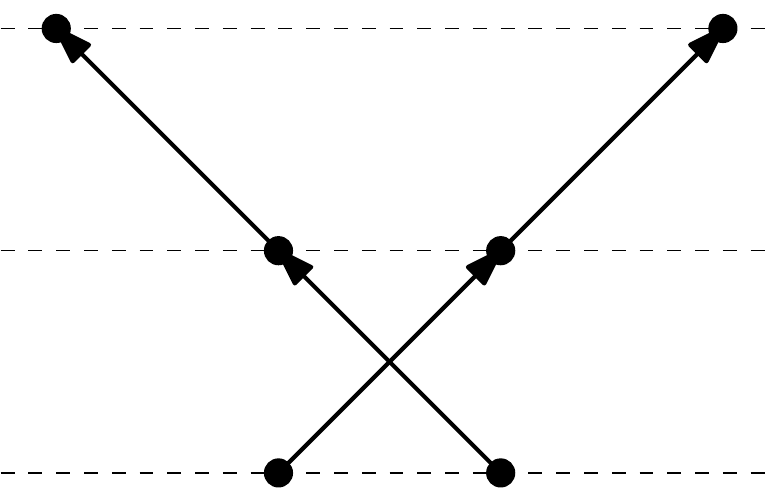}}
\caption{The black dots indicate the solutions $F^{(\tau)}$ at times $\tau=1,2,3$. Depicted are: 
(a) particle paths with the shortest total length, and (b) straight line particle paths.}
\label{fig:nohistory}
\end{center}
\end{figure} 

It is not clear upfront which of the two particle tracks might be more realistic. 
This assessment will certainly depend on prior knowledge.
The case at the left might be more realistic if there is, say, a magnetic force affecting
the particles at $\tau=2,$ which pushes them apart. If there is no indication of
an external force, then the paths on the right appear to be more appropriate.
\end{ex}

It seems quite natural to aim at incorporating general prior knowledge about
`reasonable' paths in the following way. First an expert would compile
a list of most appropriate paths or give at least parametric descriptions of
reasonable trajectories; then an algorithm should compute solutions whose paths do not
deviate too much from a closest one from the list. 

It turns out that there are serious limitations to this approach.

\begin{thm}\label{thm:matching}
The problem {\sc Trac}$(\CO;d)$ is $\NP$-hard, even if all instances are restricted to a fixed $t>2,$ 
and $\CO$ is a path value oracle.

The $\NP$-hardness persists if the objective function values 
provided by $\CO$ are all encoded explicitly.
\end{thm}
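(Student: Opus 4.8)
The plan is to prove $\NP$-hardness by reduction from a suitable $\NP$-complete combinatorial problem, exploiting the fact that in the positionally determined case with $t\ge 3$ the coupling between three consecutive layers is a genuine three-dimensional assignment problem. First I would recall that the $2$-layer case is polynomial (Theorem~\ref{thm:ILP-posdet}), so the hardness must come from the interaction of two coupled matchings; this strongly suggests reducing from \textsc{3-Dimensional Matching} (3DM), which is a canonical $\NP$-complete problem and matches the structure of a three-layer instance $F^{(1)},F^{(2)},F^{(3)}$ exactly. The core idea is that a path value oracle assigns to each particle path $\CP_i=(\pi^{(1)}(p_i),\pi^{(2)}(p_i),\pi^{(3)}(p_i))$ a value $w(\CP_i)$ that depends on the full triple of positions, not merely on the two tracking edges separately; this is precisely the extra modeling power that distinguishes the path value oracle from the Markov-type weight function and that should encode the triples of a 3DM instance.

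Concretely, I would take a 3DM instance with ground sets $X,Y,Z$ (each of size $n$) and a set of triples $\CT\subseteq X\times Y\times Z$, and construct three point sets $F^{(1)},F^{(2)},F^{(3)}\in\CF^d$ in bijection with $X,Y,Z$ respectively (for a fixed small dimension, say $d=2$, placing the points in general position so no geometric coincidences interfere). The path value oracle $\CO$ is then defined to return, for each candidate path $(x,y,z)$, the value $0$ if $(x,y,z)\in\CT$ and $1$ otherwise, so that $c(\CP_1,\dots,\CP_n)=\sum_i w(\CP_i)$ equals the number of selected paths that are \emph{not} legal triples. A coupling is a pair of perfect matchings between consecutive layers, which is exactly a decomposition of the $n$ particles into $n$ vertex-disjoint triples covering all three sets; such a coupling has cost $0$ if and only if every chosen triple lies in $\CT$, i.e.\ if and only if the 3DM instance admits a perfect matching. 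This establishes the reduction: deciding whether the optimal \textsc{Trac}$(\CO;d)$ cost is $0$ solves 3DM.

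The second assertion, that hardness persists when all oracle values are given explicitly, requires checking that the table of path values is only polynomially large. Here I would invoke the bound noted earlier in the paper: the number of candidate paths $w(\CP)$ is at most $n^t$, which for fixed $t$ is polynomial in $n$. Hence for fixed $t=3$ the oracle can be replaced by an explicit list of $n^3$ rational numbers presented as part of the input, and the reduction remains polynomial-time; this is what upgrades the oracle-based hardness to hardness with explicit encoding.

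The main obstacle I anticipate is verifying that a valid coupling of the three positionally determined sets corresponds \emph{exactly} to a partition into triples, with no spurious feasibility constraints. Because the sets are positionally determined (so $G^{(\tau)}=F^{(\tau)}$ with $|F^{(\tau)}|=n$), every coupling is automatically a pair of perfect matchings, and the bijection with partitions into $n$ disjoint triples is clean; the delicate point is ensuring the point placement in $\R^d$ admits \emph{all} couplings as geometrically realizable (i.e.\ the grids genuinely consist of the intended $n$ points and impose no hidden incidence restrictions), which a general-position argument handles. A secondary subtlety is confirming that the reduction is parsimonious or at least cost-preserving enough that the zero-cost threshold is the right decision criterion, which follows directly from the $0/1$ definition of the oracle values.
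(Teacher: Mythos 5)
Your proposal is correct and follows essentially the same route as the paper: a reduction from {\sc $3$D-Matching} with $F^{(1)}=X$, $F^{(2)}=Y$, $F^{(3)}=Z$, a $0/1$ path value oracle rewarding exactly the triples in $W$, the zero-cost criterion, and the $O(n^t)$ bound on the number of paths to justify explicit encoding of the oracle values. The only cosmetic difference is that your general-position worry is unnecessary --- in {\sc Trac}$(\CO;d)$ the sets $F^{(\tau)}$ are given directly and impose no incidence constraints, and the paper simply invokes that {\sc $3$D-Matching} remains $\NP$-complete for $X,Y,Z\subseteq\Z^d$.
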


\begin{proof} It suffices to prove the result for $t=3$.
We use a transformation from the $\NP$-complete problem

\begin{problem}{{\sc $3$D-Matching}}
\item[Instance] $n\in \N$, finite disjoint sets $X$, $Y$, $Z$, and $W\subseteq X \times Y \times Z$. 
\item[Question] Does there exist a set $M\subseteq W$ with $|M|\ge n$ such that the following holds:
Let $p_i=(x_i, y_i, z_i)\in M$, $i=1,2$, $p_1\ne p_2$, then $x_1 \ne x_2$, $y_1\ne y_2$, and $z_1\ne z_2$? 
\end{problem}

In fact, {\sc $3$D-Matching} was already contained in Karp's original list \cite{karp72} 
of $\NP$-complete problems and was shown there to be $\NP$-complete even for restricted instances
where $n=|X|=|Y|=|Z|$. Further note that the $\NP$-completeness persists if we assume in addition that $X,Y,Z \subseteq \Z^d$ for some fixed~$d\in \N.$ Further details on the computational complexity and connections to assignment problems can be found in \cite{gareyjohnson} and \cite{BDM09}, respectively. For additional results on variants of {\sc $3$D-Matching} with different cost functions see~\cite{multitracking5, multitracking11, multitracking9, multitracking3, multitracking8}; for a polynomial-time solvable variant, where the cost function has the so-called \emph{Monge property}, see~\cite{mongeproperty} (for applications, see~\cite{mongeapplications}). 

Now, suppose we are given such a (restricted) instance $(n;X,Y,Z;W)$ of {\sc $3$D-Matching}.
We set $F^{(1)}=X,$ $F^{(2)}=Y,$ $F^{(3)}=Z.$ The matching condition means
that we need to select $n$ disjoint paths~$\CP_1,\dots,\CP_n.$

The set $W$ will now be encoded in a corresponding instance of {\sc Trac}$(\CO;d)$ by means of an objective function~$c$ defined on the
set of all paths. For every path~$\CP=(g^{(1)}, g^{(2)}, g^{(3)}),$ where 
$g^{(\tau)} \in F^{(\tau)}$ for $\tau\in [3],$ we define
\[
w(\CP)= \begin{cases} 
0 & \mbox{if $(g^{(1)}, g^{(2)}, g^{(3)})\in W$,}\\
1 & \mbox{otherwise.}\\
\end{cases}
\] The cost $c(\CP_1,\dots,\CP_n)$ of any $n$ paths $\CP_1,\dots,\CP_n$ is given by $c(\CP_1,\dots,\CP_n)=\sum_{i=1}^nw(\CP_i).$

Then, there exists a matching of cardinality $n$ in $W$ if, and only if, 
there exists a solution to the {\sc Trac}$(\CO;d)$ instance with cost~$0.$
Of course, the transformation runs in polynomial time.

The second statement follows from the fact that there are only $O(n^t)$ different
paths whose costs~$0$ or~$1$ need to be encoded. 
\end{proof}

Let us point out that Theorem \ref{thm:matching} applies to the 
situation that an expert has provided an explicit list of all paths 
that are regarded physically reasonable, and the task is simply to determine whether
there exists a solution $(\CP_1,\ldots,\CP_n)$ that consists
entirely of paths from that list. 
Again we can adapt the objective function in the proof of Theorem \ref{thm:matching} 
to show that it is also an $\NP$-hard task to find particle tracks that are within a certain specified 
distance from tracks of such a list.

The following theorem makes use of a result of \cite{multitracking10} on a variant of {\sc $3$D-Matching}.

\begin{thm}\label{thm:special-weights}
For $d=2$ and $t=3$ the $\NP$-hardness of Theorem \ref{thm:matching} persists 
if the weight of $\CP=(g^{(1)}, g^{(2)},g^{(3)})$ is given by
\begin{equation} \label{eq:triangle}
w(\CP)=\frac{1}{2} |\det (g^{(2)}- g^{(1)}, g^{(3)}- g^{(1)})|.
\end{equation} 
Even checking whether a solution of weight $0$ exists is $\NP$-complete. 
\end{thm}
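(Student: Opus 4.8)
The plan is to follow the same strategy as in Theorem~\ref{thm:matching}, where a {\sc Trac}$(\CO;d)$ instance encodes a {\sc 3D-Matching} instance via a path-value oracle, but now the weight function $w(\CP)$ is not an arbitrary indicator but the geometric quantity $\tfrac12|\det(g^{(2)}-g^{(1)}, g^{(3)}-g^{(1)})|$, i.e. the (unsigned) area of the triangle with vertices $g^{(1)},g^{(2)},g^{(3)}$. The essential point is that $w(\CP)=0$ if, and only if, the three points $g^{(1)}, g^{(2)}, g^{(3)}$ are collinear. Hence asking for a solution $(\CP_1,\dots,\CP_n)$ of total weight $0$ is equivalent to asking for a perfect assignment in which every particle moves along a straight line across the three time steps. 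So the reduction must produce point sets $F^{(1)}=X$, $F^{(2)}=Y$, $F^{(3)}=Z$ in which the admissible (weight-zero) triples are exactly the collinear ones, and these collinear triples must realize a prescribed instance $W\subseteq X\times Y\times Z$ of the relevant matching problem.

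The key step is therefore a \emph{geometric encoding}: given a triple $(x_i,y_j,z_k)\in W$, I need to place the corresponding planar points so that the triple is collinear precisely when it belongs to $W$, while no `spurious' collinear triple is created. This is exactly the content of the cited result of~\cite{multitracking10}, which treats a variant of {\sc 3D-Matching} in which the matching is constrained to collinear triples (a \emph{geometric} or \emph{alignment} variant). The plan is to invoke that result directly: it furnishes, for a given combinatorial {\sc 3D-Matching} instance, an assignment of integer planar coordinates to the elements of $X,Y,Z\subseteq\Z^2$ such that a triple $(x,y,z)$ is collinear if and only if $(x,y,z)\in W$, and such that the construction is polynomial in the size of the input. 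Feeding those coordinates into {\sc Trac}$(\CO;d)$ with $d=2$, $t=3$ and the weight~\eqref{eq:triangle}, a weight-zero solution corresponds exactly to a cardinality-$n$ collinear matching.

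First I would fix $d=2$, $t=3$ and set $F^{(1)}=X$, $F^{(2)}=Y$, $F^{(3)}=Z$ using the coordinates supplied by the construction of~\cite{multitracking10}. Then I would verify the two directions of the equivalence. If a collinear matching $M$ of cardinality $n$ exists, the induced paths $\CP_1,\dots,\CP_n$ each have $w(\CP_i)=0$ by collinearity, so $c(\CP_1,\dots,\CP_n)=0$; conversely, any feasible solution of {\sc Trac}$(\CO;d)$ is a set of $n$ vertex-disjoint paths (by the coupling/assignment structure), and total cost $0$ forces every single path to have area $0$, hence to be collinear, hence to lie in $W$ by the defining property of the coordinates, yielding a matching of cardinality $n$. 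Since~\cite{multitracking10} asserts $\NP$-completeness of deciding existence of such a collinear matching, deciding whether a weight-$0$ solution exists is $\NP$-complete, which gives both the stated $\NP$-hardness and the final sentence of the theorem.

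The main obstacle is entirely delegated to the cited construction: the nontrivial work is guaranteeing that collinearity captures $W$ \emph{exactly}, with no accidental alignments among points that do not correspond to a triple in $W$, while keeping all coordinates of polynomial bit-size in the binary model. I expect that the proof in the paper will simply quote~\cite{multitracking10} for this geometric realization and then carry out the short verification above; the only part requiring care is checking that the coordinate sizes and the evaluation of the determinant oracle remain polynomial, which is immediate since the determinant of integer $2\times 2$ matrices is computed in polynomial time and there are only $O(n^3)$ candidate triples whose weights the path-value oracle must report.
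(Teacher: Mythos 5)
Your high-level plan coincides with the paper's: both arguments delegate all hardness to \cite{multitracking10} and then observe that, under \eqref{eq:triangle}, a solution of total weight $0$ is exactly a perfect matching into collinear triples. Indeed, the paper's proof is even shorter than you anticipate: it constructs no reduction at all, but simply identifies {\sc Trac}$(\CO;2)$ with $t=3$ and the weights \eqref{eq:triangle} with the problem {\sc A3ap} studied in \cite{multitracking10} (partition disjoint $X,Y,Z\subseteq\Z^2$ of cardinality $n$ into $n$ triples minimizing the total triangle area), noting that the area of $\conv\{x,y,z\}$ equals $\frac{1}{2}|\det(y-x,z-x)|$; the $\NP$-hardness of {\sc A3ap} and the $\NP$-completeness of its zero-cost decision version are taken as a black box from \cite{multitracking10}.

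There is, however, a genuine flaw in the intermediate claim you attribute to \cite{multitracking10}: that for an \emph{arbitrary} {\sc 3D-Matching} instance $W\subseteq X\times Y\times Z$ one can compute planar integer coordinates so that a triple is collinear if, and only if, it belongs to $W$. This is not merely absent from the citation; it is false in general, because collinearity relations of planar point sets are constrained by incidence geometry. For instance, if $(x_1,y_1,z_1),(x_1,y_1,z_2)\in W$ with $z_1\neq z_2$, then $z_1$ and $z_2$ must both lie on the line through $x_1$ and $y_1$; if moreover $(x_2,y_2,z_1),(x_2,y_2,z_2)\in W$, that same line is forced to be the line through $x_2$ and $y_2$, so all six points are collinear and spurious triples such as $(x_1,y_2,z_1)\notin W$ become collinear anyway. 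Hence a reduction routed through such an exact realization would fail on these instances; this is precisely why \cite{multitracking10} proves hardness of the geometric problem via its own gadget construction from a restricted source problem rather than by realizing arbitrary $W$. The repair is immediate and brings you back to the paper's argument: drop the realization lemma, cite \cite{multitracking10} for the hardness of the area-minimization assignment problem itself, and keep your (correct) final verification that $w(\CP)=0$ iff $g^{(1)},g^{(2)},g^{(3)}$ are collinear, so that weight-$0$ solutions are exactly the collinear perfect matchings.
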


\begin{proof}
The assertion follows from the result of \cite{multitracking10} that states that the problem
 {\sc A3ap} is $\NP$-hard (and its respective decision version is $\NP$-complete). Here disjoint sets $X,Y,Z \in \Z^2$ of cardinality $n$ are given,
and the goal is to find $n$ disjoint subsets $\{x,y,z\}$ with $x\in X$, $y\in Y$, $z\in Z$ of minimal weight.
%For {\sc C3ap} the weight is defined as the sum of the circumferences of the triangles, i.e.,  the sum of $\| x-y \|_{(2)} + \|x-z \|_{(2)} +\|y-z \|_{(2)}$ for all selected triples.
For {\sc A3ap} the weight is defined as the sum of the areas of the triangles.
Now, recall that the area of the triangle $\conv\{x,y,z\}$ is given by 
$\frac{1}{2}|\det (y-x, z-x)|$.
\end{proof} 

\begin{cor} \label{cor:straightline}
For every fixed $d\geq 2$ and $t\geq 3$ it is an $\mathbb{N}\mathbb{P}$-complete problem to decide whether a solution of {\sc Trac}$(\CO;d)$ exists where all particles move along straight lines.
\end{cor}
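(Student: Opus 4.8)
The plan is to derive Corollary~\ref{cor:straightline} directly from Theorem~\ref{thm:special-weights}, observing that the triangle-area weight in~\eqref{eq:triangle} vanishes exactly when the three points are collinear. First I would handle the base case $d=2$, $t=3$. For a path $\CP=(g^{(1)},g^{(2)},g^{(3)})$ with $g^{(\tau)}\in\R^2$, the quantity $w(\CP)=\frac{1}{2}|\det(g^{(2)}-g^{(1)},g^{(3)}-g^{(1)})|$ is the area of the triangle $\conv\{g^{(1)},g^{(2)},g^{(3)}\}$, and this area is $0$ if and only if the three points are collinear, i.e., the particle moves along a straight line. Hence a solution of {\sc Trac}$(\CO;2)$ in which \emph{every} particle moves along a straight line is precisely a solution of total weight $0$ under the objective function $c(\CP_1,\dots,\CP_n)=\sum_{i=1}^n w(\CP_i)$. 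By Theorem~\ref{thm:special-weights}, deciding whether a weight-$0$ solution exists is already $\NP$-complete, which settles the case $d=2$, $t=3$.

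Next I would lift this to arbitrary fixed $d\ge 2$ and $t\ge 3$. For the dimension, I would simply embed the planar instance into $\R^d$ via the map $(a,b)\mapsto(a,b,0,\dots,0)$; collinearity of three points is preserved under affine embeddings, so the reduction transfers verbatim and a straight-line solution in $\R^2$ corresponds to one in $\R^d$ and conversely. For the number of time steps, for $t>3$ I would pad the instance by appending $t-3$ further point sets $F^{(4)},\dots,F^{(t)}$ that force the trajectories to continue along the already-determined lines. A clean way is to place each candidate $F^{(\tau)}$ for $\tau\ge 4$ so that the collinearity constraint across every consecutive triple $(g^{(\tau-2)},g^{(\tau-1)},g^{(\tau)})$ is satisfiable only by continuing the line from the first three positions; one defines the path-weight as the sum over consecutive triples of the corresponding triangle areas (a path value oracle of the same form), which is $0$ exactly when the whole path is collinear.

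The membership in $\NP$ is immediate: a candidate solution is an $n$-tuple of paths, which has polynomial size, and checking that it is a valid coupling and that every path is straight (each consecutive triple collinear, i.e., has vanishing determinant) can be done in polynomial time in the binary Turing machine model, since all coordinates are rational. Combined with the $\NP$-hardness inherited from Theorem~\ref{thm:special-weights} through the two reductions above, this gives $\NP$-completeness.

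The only point requiring genuine care is the padding argument for $t>3$: one must ensure that the added layers do not accidentally make the problem easier or introduce spurious straight-line solutions, so that the weight-$0$ solutions of the padded instance are in exact correspondence with those of the original $t=3$ instance. The cleanest route is to place every point of each $F^{(\tau)}$, $\tau\ge 4$, at the \emph{unique} position extending a potential line from layers $1$--$3$, so that any collinear extension is forced and the correspondence is bijective; the $d$-dimensional embedding and the $\NP$-membership verification are then routine.
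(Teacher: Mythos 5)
Your base case and dimension lift coincide with the paper's proof: Theorem~\ref{thm:special-weights} gives $\NP$-completeness for $d=2$, $t=3$ (the triangle area in~\eqref{eq:triangle} vanishes exactly for collinear triples), and viewing a planar instance as an instance in $\Q^d$ handles all $d\ge 2$; your explicit $\NP$-membership check is also fine. The genuine gap is in your padding for $t>3$. You propose to place each point of $F^{(\tau)}$, $\tau\ge 4$, at ``the unique position extending a potential line from layers $1$--$3$,'' but this is circular: the lines are determined only by a straight-line coupling of the first three layers, and the existence of such a coupling is precisely what is $\NP$-complete to decide, so the reduction must construct $F^{(4)},\dots,F^{(t)}$ without knowing it. There can be up to $n^3$ candidate triples, and distinct straight-line solutions may use different lines, while each padded layer may contain only $n$ points; committing to extensions of guessed lines can turn a YES instance into a NO instance. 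Moreover, collinearity alone does not determine a unique next position (the speed along the line is unconstrained), so ``the unique position'' tacitly presupposes uniform motion, which the statement does not require. A further small defect: your sum-of-consecutive-triangle-areas objective vanishes on paths such as $(a,b,b,c)$ with $a,b,c$ not collinear (both consecutive triples are degenerate), so zero weight under your oracle is strictly weaker than collinearity of the whole path once points may repeat.

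The paper sidesteps all of this with a trivial padding that you did not find: set $F^{(\tau)}=F^{(3)}$ for all $\tau\ge 4$, i.e., regard the $t=3$ instance as one in which the particles simply do not move after $\tau=3$. Then any straight-line solution of the original instance extends by constant continuation (a repeated endpoint stays on the line through the first three positions), and conversely, restricting any straight-line solution of the padded instance to the first three layers yields a straight-line solution of the original --- note that this restriction argument works even if the padded solution permutes points among the identical later layers. With your forced-extension construction replaced by this repetition argument, the rest of your proof goes through.
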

\begin{proof} The result for $d=2$ and $t=3$ was given in Theorem~\ref{thm:special-weights}. As every planar instance can be viewed as an instance in $\Q^d,$ $d\geq2,$ the $\NP$-hardness carries immediately over to any $d\geq 2.$ Further, as any instance for $t=3$ can be viewed as a special case of an instance for $t\geq3$ where the particles do not move after $\tau=3,$ the $\NP$-hardness persists for any $t\geq3.$ 
\end{proof}

Clearly, for non-negative weights $w(\CP_k)$ we have \[\sum_{k=1}^n w(\CP_k)=0 \qquad \Leftrightarrow \qquad \max_{k\in[n]} w(\CP_k)=0.\] Hence, Theorem~\ref{thm:special-weights} implies the following result (see also~\cite{bottleneckmatching}).

\begin{cor} \label{cor:minmax}
Already for $d=2$ and $t=3$ it is $\NP$-hard to find a solution of {\sc Trac}$(\CO;d),$ which minimizes the objective function 
\[c(\CP_1,\dots,\CP_n)=\max_{i\in[n]}w(\CP_k).
\] 
\end{cor}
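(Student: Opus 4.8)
The plan is to derive Corollary~\ref{cor:minmax} directly from Theorem~\ref{thm:special-weights} by exploiting the elementary observation that, for any collection of non-negative reals, their sum vanishes if and only if their maximum vanishes. This equivalence is already displayed in the paragraph preceding the corollary statement, so the reduction is essentially immediate once the non-negativity of the weights is confirmed.

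First I would recall the weight function from Theorem~\ref{thm:special-weights}, namely $w(\CP)=\tfrac{1}{2}|\det(g^{(2)}-g^{(1)},g^{(3)}-g^{(1)})|$ for a path $\CP=(g^{(1)},g^{(2)},g^{(3)})$. Since this is an absolute value, every $w(\CP_k)\ge 0$, so the displayed equivalence applies verbatim. Consequently, deciding whether $\min_{(\CP_1,\dots,\CP_n)}\max_{k\in[n]}w(\CP_k)=0$ is exactly the same decision problem as deciding whether $\min\sum_{k=1}^n w(\CP_k)=0$. Theorem~\ref{thm:special-weights} asserts that the latter (checking existence of a weight-$0$ solution, i.e.\ that all particles move along straight lines) is $\NP$-complete. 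Hence the former threshold problem is $\NP$-complete as well, which in turn makes the optimization problem of minimizing $c(\CP_1,\dots,\CP_n)=\max_{k\in[n]}w(\CP_k)$ $\NP$-hard: any polynomial-time minimizer would decide the $\NP$-complete threshold question by comparing the optimum to $0$.

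The only point requiring the slightest care is to keep the objective-function oracle within the model: the max-type cost $c(\CP_1,\dots,\CP_n)=\max_k w(\CP_k)$ must be computable in polynomial time from the input, which it is, since each individual $w(\CP_k)$ is a $2\times2$ determinant computable in polynomial time and the maximum over $n$ paths in any candidate solution costs $O(n)$ evaluations. Thus $\CO$ is a legitimate objective function oracle (though not a path \emph{value} oracle, as the cost is no longer additive), and the same planar instances with $X,Y,Z\subseteq\Z^2$ used in Theorem~\ref{thm:special-weights} serve as the hard instances here without modification.

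I do not anticipate a genuine obstacle: the content is entirely in Theorem~\ref{thm:special-weights}, and the corollary is a one-line consequence of the sum-equals-max equivalence for non-negative terms. If anything needs emphasis in writing, it is merely that the equivalence transfers the \emph{decision} hardness (weight $0$ vs.\ positive) to the max-objective, from which $\NP$-hardness of the corresponding minimization problem follows in the standard way.
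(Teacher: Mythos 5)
Your proposal is correct and follows essentially the same route as the paper, which derives Corollary~\ref{cor:minmax} from Theorem~\ref{thm:special-weights} precisely via the displayed equivalence $\sum_{k=1}^n w(\CP_k)=0 \Leftrightarrow \max_{k\in[n]}w(\CP_k)=0$ for non-negative weights, the non-negativity being immediate from the absolute value in the triangle-area weight~\eqref{eq:triangle}. Your extra observations---that the max-type cost is polynomial-time computable and hence a legitimate objective function oracle (though no longer a path value oracle), and that the same hard instances with $X,Y,Z\subseteq\Z^2$ carry over unchanged---are accurate refinements of what the paper leaves implicit.
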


Comparing the results from Theorem~\ref{thm:special-weights} 
%and Corollary~\ref{cor:minmax} 
with that from Theorem~\ref{thm:ILP-posdet} it seems interesting to note that the former problem is~$\NP$-hard while its Markov-type counterpart is polynomial-time solvable. Results similar to Theorem~\ref{thm:special-weights} for other types of weights can be found in~\cite{multitracking9,multitracking10}.

In view of the discouraging results above one cannot expect to be able to incorporate too much 
of the individual particles'  history or information about reasonable trajectories without sacrificing efficiency.
In fact, unless $\P=\NP$, any algorithm for that task must either fail 
to produce an optimal solution (at least under specific circumstances) or must have a super-polynomial running time. In other words, any such algorithm 
bears, for a given practical instance, the risk of not returning a solution within an
acceptable time. 

With this warning we introduce new polynomial-time heuristics, which favor 
paths that are regarded `reasonable.' The algorithms can be seen as examples 
of more general paradigms. In order to keep the exposition simple we describe 
only basic prototypes.

Let us point out first that, of course, the rolling horizon Algorithm \ref{alg:rolling-horizon} can be 
modified in such a way that it allows to utilizes even `intuitive' knowledge 
of how the current particles tracks determined up to the moment~$\tau$ in time 
should possibly be extended to $\tau+1$.
All that is needed is to quantify this knowledge and use it to define 
the weights for the next time step. If the previous knowledge is restricted
to a fixed number~$k$ of the last past moments we arrive at a {\sc $k$-Rolling Horizon}
algorithm. This may be a promising method if the correct sets 
$F^{(1)},\ldots,F^{(k)}$ are known. 

If this is not the case, {\sc $k$-Rolling Horizon} may run into problems.
In fact, for the tomographic construction of the first set $F^{(1)}$ 
no other than the X-ray information is available. 
The determination of $F^{(2)}$ is then also based on $F^{(1)}$ etc. 
But this means that the choices in the previous steps cannot be revised later
anymore. Hence there may be other sets, tomographically 
equivalent to the choice of $F^{(1)}$ made by the algorithm, which allow 
much more realistic paths.
If, at some moment $\tau$ the deviation from
realistic shapes is noticed, {\sc $k$-Rolling Horizon} does, however, not
provide any remedy.

We will therefore pursue now an approach that is capable of using
background knowledge in a more global and balanced way.
We begin by introducing a mathematical setting for regarding a 
particle path~$\CP=(g^{(1)},\ldots, g^{(t)})$ 
of $\CI=(G^{(1)},\ldots, G^{(t)})$ as {\em reasonable}.

Let, again, $\|\cdot\|$ be a norm in $\R^d,$ let $h:[0,\infty[\rightarrow [0,\infty[$ be 
strictly monotone such that $h(\|x\|)\in \Q$ for every $x\in \Q^d$ and such that $\size(h(\|x\|))$ 
is bounded by a polynomial in $\size(x).$ Examples for pairs~$(\|\,\cdot\,\|,h)$ include~$(\|\,\cdot\,\|_{(\infty)},\id)$ and $(\|\,\cdot\,\|_{(p)},(\,\cdot\,)^p)$ for $p\in \N.$

For $k\in \N$, any $k$-tuple $C=(g^{(\tau_1)},\ldots, g^{(\tau_k)})$ with $1\leq \tau_1<\tau_2< \ldots <\tau_k\leq t$
and $g^{(\tau_i)}\in G^{(\tau_i)},$ for~$i\in [k],$ 
is called a {\em $k$-sample} of $\CI$. Let $\CC(\CI)$ denote the set of
all $k$-samples of $\CI$. Further, if $Q$ is a set of grid points, we
write $\CC(\CI;Q)$ for the subset of those $C$ that contain all elements of~$Q.$
Also, if the $\tau_i\in [t]$ are fixed we speak of a $k$-sample for 
$(\tau_1,\tau_2,\ldots,\tau_k)$. Let $\CC(\CI; Q; \tau_1,\ldots,\tau_k)$ 
denote the set of all $k$-samples of $C\in \CC(\CI;Q)$ for $(\tau_1,\ldots,\tau_k)$.

Now, let $k\in \N$ and let $\CR_k(\CI)$ be a family of curves $r:[0,t]\rightarrow \Q^d$
with the following properties:
For any $k$-sample $C=(g^{(\tau_1)},\ldots, g^{(\tau_k)})$ of $\CI$ 
there is a unique $r_C\in \CR_k(\CI)$ such that $r_C(\tau_i)= g^{(\tau_i)}$ for $i\in [k]$.
Further, $r_C(\tau)$ and $h(\|r_C(\tau)- g^{(\tau)}\|)$ can be computed in 
polynomial time for $\tau \in [t]\setminus \{\tau_1,\ldots,\tau_k\}$.
The function $r$ will be called {\em sample fit}. 

As an example, we may choose $\CR_2(\CI)$ to consist of all lines through 
two grid points at two different moments in time. 
Such a choice would favor straight line movements of
particles. Generalizing this, we  may consider $\CR_k(\CI)$ to consist of all polynomial curves~$\varphi$ of degree~$k-1$ or less, i.e.,  \[\varphi :\Q\to\Q^d, \qquad \tau\mapsto a_0+a_1\tau+\cdots+a_{k-1}\tau^{k-1},\] with $a_1,\dots,a_{k-1}\in\mathbb{Q}^d.$ Quadratic curves, for instance, are often used to describe trajectories of objects that move under the action of gravity; see, e.g.,~\cite{basketballtracking, dropletmotion}. 

The sample fits will be regarded as representing the a priori knowledge 
about the shapes and other properties of the particle paths. This
knowledge can be incorporated in various ways. Let us begin with
the positionally determined case, i.e., we assume $G^{(\tau)}=F^{(\tau)}$ for $\tau\in [t].$

So, suppose, that for some fixed $k\in \N$ sample fits 
$\CR_k(\CI)$ are (implicitly) available as 
specified above. Then the following algorithm prefers particle tracks 
that are close to curves of $\CR_k(\CI).$

\begin{alg}[{\sc Path Fitting}]\label{alg:curve} Let
$\CI=(F^{(1)},\ldots, F^{(t)})$ be given. Choose $\tau_1,\dots,\tau_k$ with $1\leq\tau_1<\tau_2< \ldots <\tau_k\leq t.$
Then, for every $i \in I^{(\tau_1)}$ and $j \in I^{(\tau_k)},$ set   $Q_{i,j}=\{g_i^{(\tau_1)},g_j^{(\tau_k)}\}$ and compute 
\begin{equation} \label{eq:sample1}
\gamma_{i,j}=
\min_{C\in \CC(\CI; Q_{i,j}; \tau_1,\ldots,\tau_k)} \, 
\max_{\tau \in [t]} \, 
\min_{g^{(\tau)} \in G^{(\tau)}} h(\|r_C(\tau)- g^{(\tau)}\|).
\end{equation}
Next compute a minimum weight perfect bipartite matching $M$ for $G^{(\tau_1)}$, $G^{(\tau_k)}$ 
with weights~$\gamma_{i,j}$ for the edges~$(g_i^{(\tau_1)}, g_j^{(\tau_k)})$, $i \in I^{(\tau_1)},$ $j \in I^{(\tau_k)},$
and with~$r_{i,j}$ denoting a corresponding sample fit for which the minimum in \eqref{eq:sample1} is attained.
Finally, assign to the $n$ curves $r_{i,j},$ $(g_i^{(\tau_1)}, g_j^{(\tau_k)})\in M,$ the points of $F^{(\tau)}$ for $\tau \in [t]\setminus \{\tau_1,\ldots,\tau_k\}$ according to the (lexicographical) nearest neighbor rule with respect to their
reference points $r_{i,j}(\tau).$
\end{alg}

As described, there is still ample freedom for specifying or varying Algorithm \ref{alg:curve}. We can, for instance, replace the weights ~\eqref{eq:sample1}
in various ways. For instance, another natural choice is  
\[
\gamma_{i,j}=
\min_{C\in \CC(\CI; Q_{i,j}; \tau_1,\ldots,\tau_k)} \, 
\sum_{\tau \in [t]\setminus \{\tau_1,\ldots,\tau_k\}} \, 
\Bigl(\min_{g^{(\tau)} \in G^{(\tau)}} h(\|r_C(\tau)- g^{(\tau)}\|\Bigr)^2.
\]
Another option is to average over a prescribed number of near best 
sample fits. Further, the exact condition 
$r_C(\tau_i)= g^{(\tau_i)}$ for $i\in [k]$ can be replaced by an approximate sample fit. 
Hence {\sc Path Fitting} can be considered as a class of algorithms,
and the various specifications need to be comprehensively evaluated 
on real-world data. In any case, the framework is algorithmically efficient.

\newpage

\begin{thm}\label{thm:curve-fitting}
{\sc Path Fitting} runs in polynomial time.
\end{thm}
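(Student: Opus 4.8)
The plan is to establish that {\sc Path Fitting} (Algorithm~\ref{alg:curve}) runs in polynomial time by bounding the work performed in each of its three phases: the computation of the weights $\gamma_{i,j}$, the bipartite matching, and the final nearest-neighbor assignment. Throughout, I would measure the input size in the binary Turing machine model and recall that in the positionally determined case we have $|G^{(\tau)}| = |F^{(\tau)}| = n$ for each $\tau \in [t]$, so all index sets $I^{(\tau)}$ have cardinality $n$.

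**First I would** analyze the cost of computing a single weight $\gamma_{i,j}$ from \eqref{eq:sample1}. The outer minimum ranges over $C \in \CC(\CI; Q_{i,j}; \tau_1,\ldots,\tau_k)$, i.e., over all $k$-samples that agree with the fixed endpoints $g_i^{(\tau_1)}$ and $g_j^{(\tau_k)}$ at times $\tau_1$ and $\tau_k$; the remaining $k-2$ coordinates range freely over their respective grids, giving at most $n^{k-2}$ such samples. For each fixed sample $C$, the defining hypothesis on $\CR_k(\CI)$ guarantees that the unique sample fit $r_C$ exists and that both $r_C(\tau)$ and $h(\|r_C(\tau) - g^{(\tau)}\|)$ can be evaluated in time polynomial in the input size. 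Hence, for each of the $t$ values of $\tau$, the inner minimum $\min_{g^{(\tau)} \in G^{(\tau)}} h(\|r_C(\tau) - g^{(\tau)}\|)$ requires $n$ such evaluations, the maximum over $\tau \in [t]$ costs $O(t)$ comparisons, and evaluating a single sample $C$ therefore takes time polynomial in the input. Multiplying by the $n^{k-2}$ samples shows that each $\gamma_{i,j}$, together with a minimizing sample fit $r_{i,j}$, is computed in time polynomial in the input size; since there are $n^2$ pairs $(i,j)$, the entire first phase is polynomial.

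**The remaining two phases are routine.** The minimum weight perfect bipartite matching $M$ on $G^{(\tau_1)}, G^{(\tau_k)}$ with the $n^2$ weights $\gamma_{i,j}$ is solvable in polynomial time by standard methods, exactly as invoked in the proof of Theorem~\ref{thm:ILP-posdet} (see \cite[Sect.~16~and~19]{schrijver-86}). Finally, assigning to each of the $n$ selected curves $r_{i,j}$ the nearest grid point of $F^{(\tau)}$ at each of the at most $t$ intermediate times requires $O(n)$ distance evaluations per time, each polynomial by the assumption on $(\|\cdot\|, h)$, and hence is polynomial overall.

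**The main obstacle to watch** is the factor $n^{k-2}$ arising from enumerating the free coordinates of the $k$-samples: this is polynomial precisely because $k$ is a fixed constant (not part of the input), and I would state this dependence explicitly so that the claim ``polynomial time'' is understood as polynomial in $n$ and $t$ for each fixed $k$. Everything else reduces to the standing assumption that $r_C$ and $h(\|\cdot\|)$ are polynomial-time computable and of polynomially bounded bit-size, so no genuine difficulty arises beyond careful bookkeeping of these bounds.
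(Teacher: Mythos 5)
Your proposal is correct and follows essentially the same route as the paper, whose entire proof is the count $n^2\cdot n^{k-2}\cdot (t-k)\cdot n = O(n^{k+1}t)$ for computing all $\gamma_{i,j}$ plus the observation that the remaining steps (matching and nearest-neighbor assignment) are polynomial; your bound $n^2\cdot n^{k-2}\cdot t\cdot n$ is the same up to replacing $t-k$ by $t$. Your explicit remarks that $k$ is a fixed constant and that the oracle assumptions on $r_C$ and $h(\|\cdot\|)$ guarantee polynomial-time, polynomial-size evaluations make precise what the paper leaves implicit, but introduce no new idea.
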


\begin{proof}
Simply observe that we need $n^2\cdot n^{k-2}\cdot (t-k) \cdot n = O(n^{k+1}t),$ i.e., polynomially many, computations
 to determine all $\gamma_{i,j},$ $i,j\in [n].$ Further, all other computations 
can be performed in polynomial time.
\end{proof}

Note that for $k=2$ and lines as sample fits, when applied to
Example \ref{ex:nohistory}, {\sc Path Fitting} produces
the solution shown in Figure~\ref{fig:nohistory}(b).
In general, however, it follows from Theorem \ref{thm:special-weights} that, unless $\P=\NP,$ 
{\sc Path Fitting} will not always return optimal solutions
even with lines as sample fits. In fact, the existence of lines as
sample fits corresponds to the weights \eqref{eq:triangle} in Theorem \ref{thm:special-weights}.

It is possible to use the above framework also in the fully tomographic case,
particularly, to reduce the ambiguity in the determination of the tomographic 
solutions $F^{(\tau)}$ for $\tau\in [t]$. 

\begin{alg}[{\sc Tomographic Fitting}]\label{alg:tomfit} Let, for $\tau\in [t],$ 
data functions $(f_1^{(\tau)}, f_2^{(\tau)})$ be given with~$G^{(\tau)}$ denoting the corresponding grid. Further, let 
$\CI=(G^{(1)},\ldots, G^{(t)}).$ %Again, choose $\tau_1,\dots,\tau_k$ with $1\leq\tau_1<\tau_2< \ldots <\tau_k\leq t.$

For $\tau\in [t],$  $g_i^{(\tau)}\in G^{(\tau)}$ and
$Q_i^{(\tau)}=\{g_i^{(\tau)}\}$ set
\begin{equation} \label{eq:sample}
\alpha(g_i^{(\tau)})=
\min_{C\in \CC(\CI; Q_i^{(\tau)})} \, 
\max_{\tau' \in [t]} \, 
\min_{g^{(\tau')} \in G^{(\tau')}} h(\|r_C(\tau')- g^{(\tau')}\|).
\end{equation}
Now, with $a^{(\tau)}$ being the vector with components $\alpha(g_i^{(\tau)})$
and  $A^{(\tau)}$, $x^{(\tau)}$, and $b^{(\tau)}$ as before,
solve the~$t$ uncoupled linear programs
\begin{equation}
\begin{array}{lrcl}
&\multicolumn{3}{c}{\min \,\,\bigl(a^{(\tau)}\bigr)^T x^{(\tau)}}\\[.1cm]
\textnormal{s.\,t. }&A^{(\tau)}x^{(\tau)} & = & b^{(\tau)}\\
&       x^{(\tau)} & \le &\1\\
&       x^{(\tau)} & \ge & 0\\
\end{array}
\label{eq:fitting-LP}
\end{equation} 
to determine sets $F^{(\tau)}$ which are consistent with the 
X-ray information.
Finally the paths $\CP_1,\ldots,\CP_n$ are obtained by $t-1$
calls to a routine for minimum weight perfect bipartite matching 
with weights $\omega_{i,j}^{(\tau)}= \alpha(g_i^{(\tau)})+\alpha(g_j^{(\tau+1)})$.
\end{alg}

There are, again, various other reasonable ways to define the weights $\alpha$ and $w.$ 

Of course, the final matching part in {\sc Tomographic Fitting} can be replaced by  
{\sc Path Fitting}. This leads to the algorithm
{\sc Tomographic Path Fitting}, which uses the former to produce 
$F^{(1)},\ldots,F^{(t)}$ and subsequently the latter to identify the paths.
Clearly, they both run in polynomial time.

\begin{thm}\label{thm:curve-fitting2}
{\sc Tomographic Fitting} and {\sc Tomographic Path Fitting} run in polynomial time.
\end{thm}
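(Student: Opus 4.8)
The plan is to establish the claimed polynomial running time by accounting separately for the two principal computational phases of each algorithm, namely the computation of the point weights~$\alpha(g_i^{(\tau)})$ and the subsequent optimization and matching steps. First I would bound the cost of evaluating a single weight~$\alpha(g_i^{(\tau)})$ as defined in~\eqref{eq:sample}. Fixing~$\tau$ and~$g_i^{(\tau)}$, the outer minimum ranges over all~$k$-samples~$C\in \CC(\CI; Q_i^{(\tau)})$ that contain the prescribed grid point~$g_i^{(\tau)}$; since one of the~$k$ sample positions is fixed, the number of such samples is~$O(n^{k-1})$. For each such sample the inner expression requires, for every~$\tau'\in [t]$, the evaluation of~$r_C(\tau')$ followed by a nearest-neighbor computation~$\min_{g^{(\tau')}\in G^{(\tau')}}h(\|r_C(\tau')-g^{(\tau')}\|)$ over at most~$n$ grid points; by the defining properties of the sample fits both~$r_C(\tau')$ and the values~$h(\|\cdot\|)$ are computable in polynomial time and have polynomially-bounded size. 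Hence a single weight~$\alpha(g_i^{(\tau)})$ is obtained in~$O(n^{k-1}\cdot t\cdot n)=O(n^k t)$ arithmetic operations, each on numbers of polynomial size.

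Next I would multiply through by the number of weights that must be computed. Since there are at most~$n$ grid points in each of the~$t$ layers, the total number of weights~$\alpha(g_i^{(\tau)})$ is~$O(tn)$, so the entire weight-computation phase of {\sc Tomographic Fitting} takes~$O(tn\cdot n^k t)=O(n^{k+1}t^2)$ operations, which is polynomial for fixed~$k$. I would then observe that the remaining steps are cheap: the~$t$ linear programs~\eqref{eq:fitting-LP} each have a totally unimodular constraint matrix~$A^{(\tau)}$ (as already noted in the discussion of Algorithm~\ref{alg:rolling-horizon}), so a basic optimal solution is integral and can be found in polynomial time by standard linear programming, exactly as in the argument for Theorem~\ref{thm:ILP-posdet}; and the~$t-1$ minimum weight perfect bipartite matchings, with the already-computed weights~$\omega_{i,j}^{(\tau)}=\alpha(g_i^{(\tau)})+\alpha(g_j^{(\tau+1)})$, are each solvable in polynomial time on the~$n$-by-$n$ complete bipartite graph by the same matching routine invoked in Theorem~\ref{thm:ILP-posdet}. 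Summing these polynomial contributions yields a polynomial total, establishing the claim for {\sc Tomographic Fitting}.

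For {\sc Tomographic Path Fitting} I would argue by composition. This algorithm first runs {\sc Tomographic Fitting} to produce the sets~$F^{(1)},\ldots,F^{(t)}$, which by the preceding analysis costs polynomial time; it then invokes {\sc Path Fitting} on the resulting positionally determined instance~$\CI=(F^{(1)},\ldots,F^{(t)})$ to recover the paths. Since {\sc Path Fitting} runs in polynomial time by Theorem~\ref{thm:curve-fitting}, the sequential composition of two polynomial-time procedures is again polynomial, and the total size of the intermediate data~$F^{(1)},\ldots,F^{(t)}$ is polynomially bounded, so no blow-up occurs in passing from the first phase to the second. I do not expect any genuine obstacle here: the only point demanding mild care is the bookkeeping in the weight phase, where one must confirm that~$k$ is treated as a fixed constant (so that~$n^{k-1}$ and~$n^k$ are polynomial rather than exponential) and that the oracle-type assumptions on~$(\|\cdot\|,h)$ and on the sample-fit evaluations indeed guarantee polynomial per-operation cost and polynomially-bounded intermediate magnitudes. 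Once these conventions are made explicit, the bound follows by straightforward accumulation of the individual polynomial bounds.
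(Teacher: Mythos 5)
Your proposal is correct and follows essentially the same route as the paper, which in fact gives no formal proof of Theorem~\ref{thm:curve-fitting2} beyond the remark ``clearly'': its implicit argument is exactly your accounting, namely the polynomial enumeration of sample fits as in the proof of Theorem~\ref{thm:curve-fitting}, the polynomial solvability of the LPs~\eqref{eq:fitting-LP} with totally unimodular $A^{(\tau)}$, the polynomial bipartite matchings, and closure of polynomial time under composition for {\sc Tomographic Path Fitting}. Two harmless slips in your bookkeeping deserve correction: in the tomographic case $|G^{(\tau)}|$ is $O(n^2)$ rather than $n$, and the $k$-samples in~\eqref{eq:sample} range over the choice of the moments $\tau_1<\cdots<\tau_k$ as well (contributing an extra factor of order $t^{k-1}$), but since $k$ is fixed both corrections only raise the polynomial's degree and leave your conclusion intact.
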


Let us now turn to another algorithmic paradigm. 
It can be viewed as a rolling horizon method augmented by 
some `clamping' strategy for improvement. 
For the sake of simplicity, the method will in the following be specified for the  
case $t=3$ and $k=2$ where straight line tracks are desired.

\begin{alg}[{\sc Two-way Fitting}]\label{alg:two-way}
Construct a set $F_1^{(1)}$ which satisfies the tomography constraints, and 
apply Algorithm \ref{alg:rolling-horizon} with a chosen 
objective function to compute $F_1^{(2)}$.
Let $M_1^{1,2}$ denote the pairs~$(i,j)\in I^{(1)}\times I^{(2)}$
such that $g_i^{(1)}$ is matched to $g_j^{(2)}$.

Then, by using velocity information, if available, compute a subset $B_{i,j}$ of the affine hull
$L_{i,j}=\aff \{g_i^{(1)}, g_j^{(2)}\}$ for $(i,j)\in M_1^{1,2}$
(on which, if the line would represent the real particle path, the next point 
would be expected). If no velocity information is available, take
$B_{i,j}=L_{i,j}$. Note that $L_{i,j}$ is a line unless~$g_i^{(1)}= g_j^{(2)}.$

Then, compute for each point $g^{(3)}\in G^{(3)}$ its weight
\[w(g^{(3)})=\min_{(i,j)\in M_1^{1,2}} \min_{x\in B_{i,j}} \|g^{(3)}-x\|_{(2)}^2,\]
and apply again Algorithm \ref{alg:rolling-horizon}, yielding $F_1^{(3)}$ and
a set $M_1^{2,3}\subseteq I^{(2)}\times I^{(3)}$ of matching index pairs.

Next, apply  Algorithm~\ref{alg:tomfit} for $k=2$  and 
$\CR_2(\CI)$ parametrizing the lines
through $F_1^{(3)}$ and $F_1^{(1)}$ to obtain for $\tau=2$ and weights~\eqref{eq:sample} a tomographically feasible
set $F_2^{(2)}$ and, subsequently, apply Algorithm \ref{alg:rolling-horizon} with
weights computed as in the first part of the algorithm to 
obtain a solution $F_2^{(1)}.$ Then, the procedure is iterated.

The algorithm terminates if in the next round the same triple $(F_l^{(1)}, F_l^{(2)}, F_l^{(3)})$
of sets is repeated.
\end{alg}

%\begin{thm}\label{thm:two-way}
%{\sc Two-way Fitting} terminates after finitely many steps.
%\end{thm}

%\begin{proof}
%\PG{Wir muessen zeigen, dass eine geeignete Funktion monoton ist. Hier bietet sich
%der Gesamtflaecheninhalt aller $n$ aus den jeweiligen 3 Pfadpunkten
%gebildeten Dreiecke an, denn die Flaeche eines Dreiecks mit 2 gegebenen Punkten 
%haengt monoton von dem Abstand des dritten zu der durch die beiden anderen gegebenen Geraden ab.
%Koennen wir auch zeigen, dass dieser Algorithmus (oder eine geeignete 
%Variante) das Optimum erreicht? } \ALP{Im Schritt, wo Algorithmus~4 aufgerufen wird, werden prinzipiell auch neue $F_1$ und $F_3$ produziert (und Matchings). Diese sind doch in der jetzigen Schreibweise nicht notwendigerweise gleich den alten $F_1^{(1)}$ und $F_3^{(1)}$? Entweder man sagt, was damit passiert, oder man muss die Beschreibung von Algorithmus~4 anpassen.}
%\end{proof}

As pointed out before, none of these algorithms can circumvent the 
$\NP$-hardness of the problem it addresses. Therefore all such methods are either 
only heuristics and may fail to produce a reasonable solution or they have a
super-polynomial running time. This is a worst-case analysis. Nevertheless, the introduced paradigms offer a variety of different approaches that can be tested and compared on any given real-world data set at hand. The application in~\cite{glidingarc-15} demonstrates a case where at least one of the approaches performed successfully on a given real-word data set.
%Hence it is necessary to perform comprehensive tests of the  introduced paradigms on real-world data. 

\section{Combinatorial models}\label{sect:combinatorial}

The main part of the algorithm {\sc Tomographic Fitting} was
to reduce the ambiguity that is present in our tomographic tasks 
(particularly for just two directions) by 
utilizing a priori knowledge within an optimization routine. 
In the following we will pursue a similar idea, which, however, is based
now on combinatorial requirements. 

We restrict the exposition to the case $d=m=2$ with $S_1,S_2$ denoting
the coordinate directions in $\R^2.$ Using the same notation as before, the set 
of solutions of the X-ray problems 
\[
\begin{array}{rcllr}
A^{(\tau)}x^{(\tau)} &=&b^{(\tau)} &&(\tau\in[t]),\\
x^{(\tau)} &\in & \{0,1\}^{|G^{(\tau)}|} &&(\tau\in[t]),
\end{array}
\]
will now be  restricted by incorporating prior knowledge about the potential movement of particles as hard combinatorial requirements.
More precisely, we assume that for $\tau\in[t]$ we are given 
$l_\tau\in\mathbb{N}_0$ non-empty {\em windows} 
$$
W^{(\tau)}_1,\dots,W^{(\tau)}_{l_\tau}\subseteq G^{(\tau)}
$$
together with some information about the number of grid points 
contained in them. For each window, this knowledge is presented by a pair 
$$
(\sim^{(\tau)}_i,k_i^{(\tau)})\in\{\leq,=,\geq\}\times \mathbb{N}_0.
$$
If there is no risk of confusion we identify each window $W^{(\tau)}_i$ 
with the set of indices $\{j: g_j^{(\tau)}\in W^{(\tau)}_i\}.$ 
Then the \emph{window constraints} are of the form 
$$
\sum_{j\in W^{(\tau)}_i}\xi_j^{(\tau)}\sim^{(\tau)}_i k^{(\tau)}_i.
$$
Note that the variables $\xi_j^{(\tau)}$ are the same as before 
and hence binary. Therefore the tasks of
{\sc Tomography under Window Constraints} can be formulated as the following \textsc{Ilp}s.
\begin{equation}\label{eq:windowilp}
\begin{array}{rclr}
A^{(\tau)}x^{(\tau)}&=&b^{(\tau)} &(\tau\in[t]),\\
\sum_{j\in W^{(\tau)}_i}\xi_j^{(\tau)}&\sim^{(\tau)}_i&k^{(\tau)}_i, &(\tau\in[t],\: i\in[l_\tau]),\\
x^{(\tau)}&\in&\{0,1\}^{|G^{(\tau)}|} &(\tau\in[t]).
\end{array}
\end{equation}

While the window constrains allow to model a priori knowledge,
the $t$ {\sc Ilp}s~\eqref{eq:windowilp} are uncoupled. 
As it turns out, even so, the possibilities of making efficient use of such knowledge
are rather limited.

\begin{thm} \label{thm:comb2}
{\sc Tomography under Window Constraints} is $\NP$-hard 
even if all instances are restricted to  $t=1$ and 
if the window constraints are alternatively of one of the following forms:
\begin{enumerate}
\item $\DS \sum_{j\in W^{(1)}_i}\xi_j^{(1)} = k^{(1)}_i$ with $k_i^{(1)}\in \{0,2\}$;
\item $\DS \sum_{j\in W^{(1)}_i}\xi_j^{(1)} \le k^{(1)}_i$ with $k_i^{(1)}\in \{0,2\}$.
\end{enumerate}
\end{thm}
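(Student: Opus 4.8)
The plan is to prove $\NP$-hardness of {\sc Tomography under Window Constraints} by a reduction from the static three-direction consistency problem {\sc Consistency}$_{\mathcal{F}^d}(S_1,S_2,S_3)$, exactly the $\NP$-complete problem already used in the proof of Theorem~\ref{thm:tomo-matching}. Recall that, by the results of \cite{ggp-99}, this problem stays $\NP$-complete even when restricted to instances in which, for the third direction $S_3$, every nonzero X-ray value equals $1$ and each such $S_3$-line meets the grid $G$ in exactly two candidate points. The key observation is that an X-ray constraint of this special restricted form is itself nothing but a window constraint: the line $T_j$ parallel to $S_3$ carrying the two candidates $g_{1,j},g_{2,j}$ demands that exactly one of these two points be selected, i.e. $\xi_{1,j}+\xi_{2,j}=1$. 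Since the ambient tomography in our window problem only records X-rays along $S_1$ and $S_2$ (the coordinate directions), the third direction's data must be smuggled in purely through windows.

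First I would take such a restricted instance $(f_1,f_2,f_3)$ and set $t=1$, using $f_1,f_2$ as the X-ray data functions $b^{(1)}$ along $S_1,S_2$, so that $G^{(1)}=G$ is the candidate grid. For each of the $n$ lines $T_j$ parallel to $S_3$ I would introduce the two-element window $W^{(1)}_j=\{g_{1,j},g_{2,j}\}$. The difficulty is that the allowed right-hand sides are restricted to $\{0,2\}$, so I cannot directly write $\xi_{1,j}+\xi_{2,j}=1$; this is precisely where the two stated forms must be handled separately and is the main obstacle of the proof.

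For form~(i), with equality constraints and $k^{(1)}_i\in\{0,2\}$, the trick is that the already-present X-ray equations along $S_1,S_2$ force $\sum_{g\in G}\xi_g=n$, while the $n$ disjoint pairs $W^{(1)}_j$ together cover all $2n$ candidate points. If I impose on each window the constraint $\xi_{1,j}+\xi_{2,j}=2$ on some windows and $=0$ on others, I would instead encode a choice pattern; the cleaner route is to double the configuration so that picking exactly one of $\{g_{1,j},g_{2,j}\}$ becomes forced. Concretely I would replicate each $S_3$-line, pairing $T_j$ with a mirror line so that a constraint of value $2$ across a suitably constructed four-point window is satisfiable if and only if exactly one original point is chosen; equivalently, one can attach to each pair a gadget point whose selection is tied by a value-$2$ window to leave exactly one of $g_{1,j},g_{2,j}$, with the global count $n$ fixed by the $S_1,S_2$ X-rays guaranteeing consistency. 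The feasibility of the resulting window \textsc{Ilp} then holds if and only if the original $(f_1,f_2,f_3)$ is consistent.

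For form~(ii), with inequalities $\le k^{(1)}_i$ and $k^{(1)}_i\in\{0,2\}$, the argument is more direct: a window constraint with $k=0$ forbids both points of a pair, and one with $k=2$ is vacuous, so I would use inequalities of value $2$ on each real pair $W^{(1)}_j$ to allow at most two selections there, together with a complementary family of value-$0$ windows placed on the `forbidden' candidate points (those grid points of $G^{(1)}$ not lying on any $T_j$, or the redundant copies introduced by any doubling) to rule them out; the global cardinality $n$ enforced by the $S_1,S_2$ X-rays, combined with the at-most-$2$ bounds over the $n$ disjoint pairs, again forces exactly one selection per pair, recovering $X_{S_3}F=f_3$. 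In both cases the transformation is clearly polynomial-time and parsimonious, so the $\NP$-hardness (and, via \cite{ggp-99}, the corresponding uniqueness and counting hardness) transfers to {\sc Tomography under Window Constraints}. The delicate point throughout is designing the gadget that converts the only available window values $\{0,2\}$ into the effect of the `exactly one of two' condition, and I expect that to be where the bulk of the verification lies.
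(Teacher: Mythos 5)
Your proposal takes a genuinely different route from the paper, which does not construct a direct reduction at all: it simply observes that the $\NP$-hard problems \textsc{nDR}$(4)$ from \cite{agsuperresolution} (whose hardness proof uses, apart from void constraints, only equality constraints with $k_i^{(1)}\in\{0,2\}$) and \textsc{Rec}$(2,2,0)$ from \cite{agwindowconstraints} (only $\le$-constraints with $k_i^{(1)}\in\{0,2\}$) are special cases of {\sc Tomography under Window Constraints}, which settles forms~(i) and~(ii) respectively. Your attempted direct reduction from {\sc Consistency}$_{\mathcal{F}^d}(S_1,S_2,S_3)$ has a genuine gap, and for form~(ii) the argument is actually false: on a two-point window $W^{(1)}_j=\{g_{1,j},g_{2,j}\}$ the constraint $\xi_{1,j}+\xi_{2,j}\le 2$ is vacuous, and the global cardinality $n$ (forced by the $S_1,S_2$ X-rays) together with $\le 2$ bounds over $n$ disjoint pairs does \emph{not} force one point per pair --- a solution may select both points of one pair and neither point of another while still matching all row and column sums. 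Since those sums encode only $f_1,f_2$, your constructed instance is feasible if and only if the two-direction data restricted to the live points is consistent; the third direction $f_3$ has been lost entirely. With only $\le$-constraints and right-hand sides in $\{0,2\}$ there is no way to impose the needed lower bound of one per pair without a substantially new gadget, which is presumably why the paper falls back on the nontrivial hardness of \textsc{Rec}$(2,2,0)$ rather than a short direct argument.

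For form~(i) your gadget idea can be salvaged, but not by the mechanisms you name. The ``mirror line'' four-point window is circular: a $=2$ constraint on $\{g_{1,j},g_{2,j},h_1,h_2\}$ admits the assignment $(1,1,0,0)$ unless $h_1+h_2=1$ is separately forced, which is again an exactly-one condition. Likewise ``the global count $n$'' does not force the gadget point to be selected: the window $\{g_{1,j},g_{2,j},h_j\}$ with value $2$ still admits both $g$'s and no $h_j$, and $n$ points spread over $n$ disjoint triples does not exclude this. What does work is to pin $\xi_{h_j}=1$ locally, by placing $h_j$ on a private row and private column each carrying X-ray value $1$; but then you must confront a point your sketch never addresses: the candidate grid is the full product of all support rows and columns, so each new row and column spawns many spurious grid points, all of which have to be annihilated by $=0$ windows (fortunately available in form~(i)) before the private row sum of $1$ really forces $\xi_{h_j}=1$ and before the original row and column sums are insulated from the gadgets. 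That cleanup, together with checking that the correspondence between solutions remains parsimonious, is exactly where the verification lies, and it is missing. In summary: a plausible alternative strategy for~(i) with its key forcing step unproved, and a failed argument for~(ii).
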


\begin{proof}
In \cite{agsuperresolution} a problem of {\em Double Resolution under Noise},
called  \textsc{nDR}$(4)$, was introduced. There X-ray constraints for
the two coordinate directions in $\R^2$ are given, but also 
equality constraints for all disjoint $2\times 2$ windows of the form $a+[1]_0\times[1]_0$ with $a\in(2\mathbb{N}_0+1)^2$ need to be satisfied,
with the right hand sides $k_i^{(1)}\in  [4]_0.$
Further, for some of the windows, errors of up to $4$ points are permitted, 
which, of course, makes the corresponding constraints void. In~\cite{agsuperresolution} it is shown that \textsc{nDR}$(4)$ is $\mathbb{N}\mathbb{P}$-hard. The $\mathbb{N}\mathbb{P}$-hardness proof involves, apart from void constraints, only constraints of the form~(i). As this restricted \textsc{nDR}$(4)$  problem is a special case of {\sc Tomography under Window Constraints} with all constraints of the form~(i), this proves the first assertion. 

In~\cite{agwindowconstraints} a problem of \emph{reconstruction under block constraints}, called \textsc{Rec}$(2,2,0),$ was introduced, where again X-ray constraints for
the two coordinate directions in $\R^2$ are given. But, differently from \textsc{nDR}$(4)$, only $\leq$-constraints are present. They are defined for all disjoint $2\times 2$ windows of the form $a+[1]_0\times[1]_0$ with $a\in(2\mathbb{N}_0+1)^2,$ and the corresponding right hand sides are~$k_i^{(1)}\in  \{0,2\}.$  As this is an $\mathbb{N}\mathbb{P}$-hard problem and a special case of {\sc Tomography under Window Constraints} with all constraints of the form~(ii), this implies the second assertion. 
\end{proof} 

There are, however, special classes of instances of {\sc Tomography under Window Constraints} is polynomial-time solvable. We give two examples. 

Our first example deals with disjoint horizontal and vertical windows of width~1. In the tracking context, such windows $W^{(\tau)}_j$ can be viewed as to restrict the potential position of the $j$-th particle at time~$\tau.$ For each particle the positional uncertainty at time $\tau$ is allowed to extend only in either horizontal or vertical direction. These are, of course, rather special assumptions. They can be realistic, for instance, in cases where external forces generate a particle displacement field that contains only vectors orthogonal to one of the detector planes. 

\begin{thm} \label{thm:orth}
{\sc Tomography under Window Constraints} is polynomial-time solvable if the instances are restricted to satisfy the following two properties:
\begin{enumerate}
\item For every $\tau\in[t]$ and $i\in[l_\tau],$  \[W_i^{(\tau)}\subseteq \mathbb{Z}\times \{p_{i,\tau}\} \quad \textnormal{or} \quad W_i^{(\tau)}\subseteq \{p_{i,\tau}\}\times\mathbb{Z}, \qquad \textnormal{for some } p_{i,\tau} \in G^{(\tau)};\]
\item For every $\tau\in[t]$ and $i\neq j \in [l_\tau]$ it holds that $W_i^{(\tau)}\cap W_j^{(\tau)}=\emptyset.$
\end{enumerate} 
\end{thm}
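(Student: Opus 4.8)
The plan is to solve the $t$ systems in~\eqref{eq:windowilp} separately, exploiting the fact that they are uncoupled: it suffices to fix one time step, drop the superscript $(\tau)$, and give a polynomial-time algorithm for a single system, since summing over the $t$ steps keeps the total running time polynomial in the input size. For the single system I would reduce feasibility of the window-constrained X-ray problem to finding an \emph{integral feasible flow} in a suitable network with integer lower and upper bounds on the arcs.

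First I would set up the usual network model for the two coordinate X-rays. Introduce a source $s$, a sink $z$, a node $v_r$ for every occupied row $r$ and a node $w_c$ for every occupied column $c$, together with arcs $s\to v_r$ whose lower and upper bound both equal the row X-ray value and arcs $w_c\to z$ whose bounds both equal the column X-ray value. For every grid point $(r,c)\in G$ lying in no window I add a \emph{cell arc} $v_r\to w_c$ of capacity $[0,1]$, whose flow is intended to represent $\xi_{(r,c)}$. Because the $s\to v_r$ and $w_c\to z$ arcs force exactly the prescribed throughput, flow conservation at the $v_r$ and $w_c$ reproduces precisely the constraints $Ax=b$, and $\sum_r b_r=\sum_c b_c=n$ guarantees that the forced source and sink throughputs are consistent.

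The key step is to absorb the window constraints into this network, and here properties~(i) and~(ii) are exactly what is needed. By~(i) each window lies in a single row or a single column, and by~(ii) the windows are pairwise disjoint, so every grid point belongs to \emph{at most one} window; this is what prevents a cell from being subject to two competing window arcs, which would otherwise destroy the network structure. For a horizontal window $W_i$ in row $r$ I introduce an intermediate node $u_i$, a single arc $v_r\to u_i$, and cell arcs $u_i\to w_c$ of capacity $[0,1]$ for each $(r,c)\in W_i$; a vertical window is treated symmetrically, with $u_i$ inserted on the column side. The bounds on the single arc $v_r\to u_i$ (respectively $u_i\to w_c$) are set according to $\sim_i$ and $k_i$: lower$=$upper$=k_i$ for ``$=$'', $[0,k_i]$ for ``$\le$'', and $[k_i,|W_i|]$ for ``$\ge$''. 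By conservation at $u_i$ the flow on that arc equals $\sum_{(r,c)\in W_i}\xi_{(r,c)}$, so the window constraint is enforced exactly; at the same time every cell $(r,c)$ still corresponds to a directed path from its row node $v_r$ to its column node $w_c$ (direct when the cell lies in no window, via $u_i$ otherwise), so both X-ray balances are preserved and no variable is counted twice.

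The resulting network has $O(|G|+l_\tau)$ nodes and arcs and only integer bounds, so a feasible flow respecting the lower and upper bounds can be found in polynomial time and, by the integrality of flows in networks with integral data, may be taken integral; the $[0,1]$ capacities on the cell arcs then yield a $0$-$1$ vector $x$ satisfying all the constraints in~\eqref{eq:windowilp}, and infeasibility is detected precisely when no feasible flow exists. I expect no genuine difficulty beyond the bookkeeping of the previous paragraph — verifying that rerouting a cell through a window node alters neither its row balance nor its column balance — after which total unimodularity of the underlying network matrix (cf.\ the remarks following Algorithm~\ref{alg:ILP}) confirms both integrality and polynomial-time solvability.
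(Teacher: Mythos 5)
Your proof is correct, but it takes a genuinely different route from the paper. The paper stays with the \textsc{Ilp} formulation~\eqref{eq:windowilp}: it observes that $A^{(\tau)}$ is the node-edge incidence matrix of a bipartite graph, that property~(i) means each window row of $H^{(\tau)}$ has support contained in the support of a row of $A^{(\tau)}$, and that property~(ii) makes these supports pairwise disjoint; it then invokes Lemma~3.2 of~\cite{agwindowconstraints}, which states that any such stacked matrix $\bigl(\begin{smallmatrix}A^{(\tau)}\\ H^{(\tau)}\end{smallmatrix}\bigr)$ is totally unimodular, so the $0$-$1$ solutions are obtained from the LP relaxation in polynomial time. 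You instead build an explicit network: row and column nodes with forced source/sink throughputs encoding the X-rays, $[0,1]$ cell arcs, and --- this is where you use (i) and (ii) exactly as the paper does, though in a different guise --- one intermediate node per window, insertable on the row or column side precisely because each window lives in a single line, with no conflict because windows are disjoint; the window constraint becomes an integer lower/upper bound on the single arc through that node ($[k_i,k_i]$, $[0,k_i]$, or $[k_i,|W_i|]$ for $=$, $\le$, $\ge$), and integrality of feasible flows finishes the argument. The trade-off: the paper's argument is shorter but leans on an external TU lemma from the companion paper, and it hands you linear optimization over the feasible set for free; your construction is self-contained and combinatorial, reduces to a single feasible-flow (hence max-flow) computation with better practical running times than generic LP, and yields infeasibility certificates via flow-based arguments. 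In fact your routing through window nodes can be read as a constructive proof of the special case of that TU lemma used here, since the node-arc incidence matrix of your network is itself totally unimodular --- so the two proofs are, at bottom, two presentations of the same polyhedral fact, reached independently.
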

\begin{proof}
Consider a fixed $\tau\in [t],$ and let again $A^{(\tau)}x^{(\tau)}=b^{(\tau)},$ $x^{(\tau)}\in\{0,1\}^{|G^{(\tau)}|}$ denote the corresponding X-ray system for the directions $S_1,S_2.$ Further, let $H^{(\tau)}x^{(\tau)}\sim^{(\tau)} k^{(\tau)}$ with a binary matrix $H^{(\tau)}$ and right-hand side vector $k^{(\tau)}$ encode the window constraints. It is well known that 
\begin{equation}\label{eq:tum0}
A^{(\tau)} \textnormal{ is the node-edge incidence matrix of a (simple) bipartite graph,}
\end{equation} simply consider the graph with vertex bipartition $V=\{v_i:(i,j)\in G^{(\tau)}\}\cup\{w_j:(i,j)\in G^{(\tau)}\}$ and edge set $E=\{(v_i,w_j):(i,j)\in G^{(\tau)}\}.$

Consider now a row vector $h^T=(h_1,\dots,h_{|G^{(\tau)}|})$ of $H^{(\tau)}.$ The support $\textnormal{supp}(h^T)=\{j:h_j\neq0\}$ of $h^T$ corresponds to the set of indices $j$ of the $p_j\in W_i^{(\tau)}$ for some $i\in l_\tau.$ These $p_j\in W_i^{(\tau)}\subseteq G^{(\tau)}$ lie, by property~(i), on a line parallel to $S_1$ or $S_2,$ hence there exists a row $a^T$ of $A^{(\tau)}$ with 
\begin{equation} \label{eq:tum1}
\textnormal{supp}(h^T)\subseteq\textnormal{supp}(a^T).
\end{equation}

Further, by property~(ii) we  have 
\begin{equation}\label{eq:tum2}
\textnormal{supp}(h'^T)\cap\textnormal{supp}(\hat{h}^T)=\emptyset
\end{equation}
for any two different row vectors $h'^T$ and $\hat{h}^T$ of $H^{(\tau)}.$

In Lemma~3.2 of~\cite{agwindowconstraints} it was shown that any matrix \[\left(\begin{array}{l}A^{(\tau)}\\H^{(\tau)}\end{array}\right)\] with $A^{(\tau)}$ satisfying~\eqref{eq:tum0} and the binary matrix~$H^{(\tau)}$ satisfying~\eqref{eq:tum1} and~\eqref{eq:tum2} is totally unimodular.
Hence, the 0/1-solutions of the linear program
\[
\begin{array}{rcl}
A^{(\tau)}x^{(\tau)} & =^{\textcolor{white}{(\tau)}} & b^{(\tau)},\\
H^{(\tau)}x^{(\tau)} &\sim^{(\tau)} & k^{(\tau)},\\
       x^{(\tau)} & \le^{\textcolor{white}{(\tau)}} &\1,\\
       x^{(\tau)} & \ge^{\textcolor{white}{(\tau)}} & 0,\\
\end{array}
\label{eq:LPtum}
\] can be determined in polynomial time (see, e.g.,~\cite[Thm.~16.2]{schrijver-86}).
\end{proof}

Our second example deals with instances that arise particularly in the context of superresolution imaging (see~\cite{agsuperresolution}). The setting is similar as in \textsc{nDR}$(4)$ and~\textsc{Rec}$(2,2,0),$ which have been considered in the proof of Theorem~\ref{thm:comb2}.

\begin{thm} \label{thm:comb5}
{\sc Tomography under Window Constraints} is polynomial-time solvable if the instances are restricted to those which have the following properties:
\begin{enumerate}
\item $G^{(1)}=\cdots=G^{(t)}=[2q]^2,$ for some $q\in\mathbb{N},$ and 
\item for each $\tau\in[t]$ and $(i,j)\in[q]^2$ there is a window constraint \[\DS \sum_{p\in (2i-1,2j-1)+[1]_0^2}\xi_p^{(\tau)} = k^{(\tau)}_{i,j} \textnormal{ with }k_{i,j}^{(\tau)}\in [4]_0.\]
\end{enumerate} 
\end{thm}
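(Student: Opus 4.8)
The plan is to first exploit, exactly as in the proof of Theorem~\ref{thm:orth}, that the $t$ systems in~\eqref{eq:windowilp} are completely uncoupled across time: window constraints of the present type only link variables $\xi_p^{(\tau)}$ carrying the same superscript $\tau$. Hence it suffices to solve, for each fixed $\tau\in[t]$ separately, one static reconstruction problem on $G^{(\tau)}=[2q]^2$ subject to the two coordinate X-rays $A^{(\tau)}x^{(\tau)}=b^{(\tau)}$ together with the equality constraints $\sum_{p\in(2i-1,2j-1)+[1]_0^2}\xi_p^{(\tau)}=k_{i,j}^{(\tau)}$ over the $q^2$ pairwise disjoint $2\times2$ windows that tile the grid. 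This is precisely the \emph{noise-free double-resolution problem} of \cite{agsuperresolution}: reconstruct a binary $2q\times2q$ image from its two coordinate X-rays and its $q\times q$ block-sum (low-resolution) image. The decisive difference to the instances underlying Theorem~\ref{thm:comb2} is that \emph{every} window now carries a genuine equality constraint with $k_{i,j}^{(\tau)}\in[4]_0$; no window is void (as in \textsc{nDR}$(4)$) and none is relaxed to an inequality (as in \textsc{Rec}$(2,2,0)$). I would therefore invoke the positive result of \cite{agsuperresolution}, solve the $t$ slices one after another, and multiply by $t$.

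To exhibit the polynomiality of a single slice I would reduce it to a feasible-flow (transportation) problem. Identify $\xi_{(a,b)}^{(\tau)}$ with the edge $\{c_a,r_b\}$ of the complete bipartite graph on column nodes $c_1,\dots,c_{2q}$ and row nodes $r_1,\dots,r_{2q}$; the coordinate X-rays prescribe the degrees of the $c_a$ and the $r_b$, which on its own is the classical Gale--Ryser transportation problem and hence solvable by network flow with an integral (here $0$--$1$) optimum. The $2\times2$ windows correspond to pairwise edge-disjoint copies of $K_{2,2}$ spanned by a column pair $\{c_{2i-1},c_{2i}\}$ and a row pair $\{r_{2j-1},r_{2j}\}$, and the window constraint fixes the number of selected edges inside each copy to $k_{i,j}^{(\tau)}$. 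Because the windows \emph{tile} the grid, these bundles partition the edge set, so I would route the flow of each super-column/super-row pair through a dedicated block node whose throughput is clamped to $k_{i,j}^{(\tau)}$, thereby superimposing the block sums on the degree constraints.

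The step I expect to be the genuine obstacle is the \emph{realizability} of each individual $2\times2$ block. Clamping only the throughput of a block node fixes its within-block column degrees $(c^-,c^+)$ and row degrees $(r^+,r^-)$ with $c^-+c^+=r^++r^-=k_{i,j}^{(\tau)}$, but such aggregated marginals need not come from an actual $0$--$1$ pattern. For $k_{i,j}^{(\tau)}\in\{0,1,3,4\}$ the two splits are always jointly realizable (the configuration is forced for $0,4$, and the choices of ``which row'' and ``which column'' factor independently for $1,3$), so no difficulty arises. The delicate case is $k_{i,j}^{(\tau)}=2$, where a $2\times2$ matrix with the prescribed marginals fails to exist precisely when both splits are extreme, i.e. $c^-\in\{0,2\}$ \emph{and} $r^+\in\{0,2\}$; realizability is equivalent to at least one marginal being balanced. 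I would resolve this as in \cite{agsuperresolution}, either by refining the block gadget or by a cycle-cancelling repair: whenever a $k=2$ block carries two extreme marginals, a flow can be rerouted along an alternating cycle through the offending super-row or super-column to rebalance one marginal without disturbing the X-ray degrees or the remaining block sums. Proving that such a repair is always available --- equivalently, that the relaxation admits an integral solution meeting every block-realizability condition --- is the crux, and it is here that the structural fact that \emph{all} windows are constrained (leaving no slack for an adversarial choice, unlike \textsc{nDR}$(4)$ and \textsc{Rec}$(2,2,0)$) enters. Once one slice is solved in polynomial time, running the procedure for each of the $t$ uncoupled slices gives the claim.
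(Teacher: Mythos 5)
Your proposal is correct and takes essentially the same route as the paper, whose entire proof is the observation that (after uncoupling the $t$ slices, which is immediate since the window constraints never link different values of $\tau$) each slice is exactly the noise-free double-resolution problem, so that the theorem is a restatement of Theorem~2.1 of \cite{agsuperresolution}. Your supplementary flow-gadget sketch is not needed once that result is invoked --- and, as you note yourself, its delicate point (realizability of the $k_{i,j}^{(\tau)}=2$ blocks under clamped throughput) is precisely the crux settled in \cite{agsuperresolution}, so the citation, not the sketch, is what carries the argument.
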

This theorem is a restatement of Theorem~2.1 from \cite{agsuperresolution}. 

Additional results for combinatorial models can be found in~\cite{agsuperresolution,agwindowconstraints}.

\section{Conclusion}\label{sect:conclusion}
We have shown that even in fairly `simple' cases, the basic problems of dynamic discrete tomography are algorithmically hard.
Of course, the $\NP$-hardness results are worst-case statements.  Therefore many real-world instances may turn out to be tractable in practice. Hence we believe that it is worthwhile to test, optimize, and compare the given algorithms on data sets obtained from experimental measurements. 
A reassuring  example of an efficient handling for experimental gliding arc data can be found in~\cite{glidingarc-15}.

It is clear that additional issues will need to be addressed to deal with noise in the measurement. This does, in particular, bring up the need to handle the lurking problems due to the inherent ill-posedness of the tomographic tasks; see~\cite{ag06,agt-01}. Also, one has to model the disappearing from and (re-)entering of particles into the camera range. 

On the positive side, we believe that the analysis of the present paper and the given algorithmic paradigms can be used to develop customized  methods for given experimental data that provide useful insight into the movement of particles in many challenging applications.


\begin{thebibliography}{10}

\bibitem{a2010}
R.~J. Adrian and J.~Westerweel.
\newblock {\em Particle Image Velocimetry}.
\newblock Cambridge Aerospace Series. Cambridge University Press, New York,
  2010.

\bibitem{aharoni97}
R.~Aharoni, G.~T. Herman, and A.~Kuba.
\newblock Binary vectors partially determined by linear equation systems.
\newblock {\em Discrete Math.}, 171(1-3):1--16, 1997.

\bibitem{alpers-d03}
A.~Alpers.
\newblock {\em Instability and Stability in Discrete Tomography}.
\newblock {PhD} dissertation, Technische Universit{\"a}t M{\"u}nchen,
  Department of Mathematics, December 2003.
\newblock (published by Shaker Verlag, ISBN 3-8322-2355-X).

\bibitem{alpers-brunetti-07}
A.~Alpers and S.~Brunetti.
\newblock Stability results for the reconstruction of binary pictures from two
  projections.
\newblock {\em Image Vision Comput.}, 25(10):1599--1608, 2007.

\bibitem{ag06}
A.~Alpers and P.~Gritzmann.
\newblock On stability, error correction, and noise compensation in discrete
  tomography.
\newblock {\em SIAM J. Discrete Math.}, 20(1):227--239, 2006.

\bibitem{agsuperresolution}
A.~Alpers and P.~Gritzmann.
\newblock On double-resolution imaging and discrete tomography.
\newblock submitted, \url{http://arxiv.org/abs/1701.04399}, 2017.

\bibitem{agwindowconstraints}
A.~Alpers and P.~Gritzmann.
\newblock Reconstructing binary matrices under window constraints from their
  row and column sums.
\newblock {\em Fund. Informaticae}, 155(4):321--340, 2017.

\bibitem{agms-15}
A.~Alpers, P.~Gritzmann, D.~Moseev, and M.~Salewski.
\newblock {3D} particle tracking velocimetry using dynamic discrete tomography.
\newblock {\em Comput. Phys. Commun.}, 187(1):130--136, 2015.

\bibitem{agt-01}
A.~Alpers, P.~Gritzmann, and L.~Thorens.
\newblock Stability and instability in discrete tomography.
\newblock {\em LNCS 2243: Digital and Image Geometry}, pages 175--186, 2001.

\bibitem{multitracking5}
H.~J. Bandelt, Y.~Crama, and F.~C.~R. Spieksma.
\newblock Approximation algorithms for multi-dimensional assignment problems
  with decomposable costs.
\newblock {\em Discrete Appl. Math.}, 49(1-3):25--50, 1994.

\bibitem{mongeproperty}
W.~Bein, P.~Brucker, J.~K. Park, and P.~K. Pathak.
\newblock A {M}onge property for the {$d$}-dimensional transportation problem.
\newblock {\em Discrete Appl. Math.}, 58(2):97--110, 1995.

\bibitem{birkhoff}
G.~Birkhoff.
\newblock Tres observaciones sobre el algebra lineal.
\newblock {\em Univ. Nac. Tucum{\'a}n {(A)}}, 5(1):147--151, 1946.

\bibitem{plasmaphysicsbook}
J.~A. Bittencourt.
\newblock {\em Fundamentals of Plasma Physics}.
\newblock Springer, New York, 2004.

\bibitem{objecttracking}
A.~G. Bors and I.~Pitas.
\newblock Prediction and tracking of moving objects in image sequences.
\newblock {\em IEEE Trans. Image Process.}, 9(8):1441--1445, 2000.

\bibitem{bdgv-08}
S.~Brunetti, A.~Del~Lungo, P.~Gritzmann, and S.~de~Vries.
\newblock On the reconstruction of binary and permutation matrices under
  (binary) tomographic constraints.
\newblock {\em Theor. Comput. Sci.}, 406(1-2):63--71, 2008.

\bibitem{assignmentbook09}
R.~Burkard, M.~Dell'Amico, and S.~Martello.
\newblock {\em Assignment problems}.
\newblock Society for Industrial and Applied Mathematics (SIAM), Philadelphia,
  USA, 2009.

\bibitem{BDM09}
R.~Burkard, M.~Dell'Amico, and S.~Martello.
\newblock {\em Assignment Problems}.
\newblock SIAM, Philadelphia, 2009.

\bibitem{multitracking6}
R.~E. Burkard and E.~\c{C}ela.
\newblock Linear assignment problems and extensions.
\newblock In {\em Handbook of combinatorial optimization, {S}upplement {V}ol.
  {A}}, pages 75--149. Kluwer Acad. Publ., Dordrecht, 1999.

\bibitem{mongeapplications}
R.~E. Burkard, B.~Klinz, and R.~Rudolf.
\newblock Perspectives of {M}onge properties in optimization.
\newblock {\em Discrete Appl. Math.}, 70(2):95--161, 1996.

\bibitem{basketballtracking}
H.~T. Chen, M.~C. Tien, Y.~W. Chen, W.~J. Tsai, and W.~J. Lee.
\newblock Physics-based ball tracking and {3D} trajectory reconstruction with
  applications to shooting location estimation in basketball video.
\newblock {\em J. Vis. Commun. Image Represent.}, 20(3):204--216, 2009.

\bibitem{surveillance}
R.~T. Collins, A.~J. Lipton, H.~Fujiyoshi, and T.~Kanade.
\newblock Algorithms for cooperative multisensor surveillance.
\newblock {\em Proc. IEEE}, 89(10):1456--1477, 2001.

\bibitem{multitracking11}
Y.~Crama and F.~C.~R. Spieksma.
\newblock Approximation algorithms for three-dimensional assignment problems
  with triangle inequalities.
\newblock {\em European J. Oper. Res.}, 60(3):273--279, 1992.

\bibitem{multitracking9}
A.~\'Custi\'c, B.~Klinz, and G.~J. Woeginger.
\newblock Geometric versions of the three-dimensional assignment problem under
  general norms.
\newblock {\em Discrete Optim.}, 18:38--55, 2015.

\bibitem{DPS17}
R.~Dalitz, S.~Petra, and C.~Schn{\"o}rr.
\newblock {C}ompressed {M}otion {S}ensing.
\newblock In {\em Proc.~SSVM}, volume 10302 of {\em LNCS}, pages 602--613.
  Springer, 2017.

\bibitem{bottleneckmatching}
T.~Dokka, A.~Kouvela, and C.~R. Spieksma.
\newblock Approximating the multi-level bottleneck assignment problem.
\newblock {\em Oper. Res. Lett.}, 40(4):282--286, 2012.

\bibitem{Elsinga2006}
G.~E. Elsinga, F.~Scarano, B.~Wieneke, and B.~W. Oudheusden.
\newblock Tomographic particle image velocimetry.
\newblock {\em Exp. Fluids}, 41(6):933--947, 2006.

\bibitem{gale57}
D.~Gale.
\newblock A theorem on flows in networks.
\newblock {\em Pacific J. Math.}, 7:1073--1082, 1957.

\bibitem{gardner-gritzmann-99}
R.~J. Gardner and P.~Gritzmann.
\newblock Uniqueness and complexity in discrete tomography.
\newblock In {\em Discrete tomography: Foundations, algorithms, and
  applications, (ed. by G.T. Herman and A. Kuba)}, pages 85--113. Birkh\"auser,
  Boston, 1999.

\bibitem{ggp-99}
R.~J. Gardner, P.~Gritzmann, and D.~Prangenberg.
\newblock On the computational complexity of reconstructing lattice sets from
  their {X}-rays.
\newblock {\em Discrete Math.}, 202(1-3):45--71, 1999.

\bibitem{gareyjohnson}
M.~R. Garey and D.~S. Johnson.
\newblock {\em Computers and Intractability}.
\newblock W. H. Freeman and Co., San Francisco, USA, 1979.

\bibitem{gritzmann97}
P.~Gritzmann.
\newblock On the reconstruction of finite lattice sets from their {X}-rays.
\newblock In {\em Discrete Geometry for Computer Imagery (Eds.: E.~Ahronovitz
  and C.~Fiorio), DCGI'97, Lecture Notes on Computer Science 1347, Springer},
  pages 19--32, 1997.

\bibitem{gritzmann-devries-2001}
P.~Gritzmann and S.~de~{V}ries.
\newblock Reconstructing crystalline structures from few images under high
  resolution transmission electron microscopy.
\newblock In {\em Mathematics: Key Technology for the Future, (ed. by W.
  J\"ager and H.-J. Krebs)}, pages 441--459. Springer, 2003.

\bibitem{glw11}
P.~Gritzmann, B.~Langfeld, and M.~Wiegelmann.
\newblock Uniqueness in discrete tomography: Three remarks and a corollary.
\newblock {\em SIAM J. Discrete Math.}, 25(4):1589--1599, 2011.

\bibitem{herman-kuba-99}
G.~T. Herman and A.~Kuba, editors.
\newblock {\em Discrete Tomography: Foundations, Algorithms, and Applications}.
\newblock Birkh\"auser, Boston, 1999.

\bibitem{herman-kuba-07}
G.~T. Herman and A.~Kuba, editors.
\newblock {\em Advances in Discrete Tomography and Its Applications}.
\newblock Birkh\"auser, Boston, 2007.

\bibitem{missingdata}
H.~Jiang, J.~Wang, Y.~Gong, N.~Rong, Z.~Chai, and N.~Zheng.
\newblock Online multi-target tracking with unified handling of complex
  scenarios.
\newblock {\em IEEE Trans. Image Process.}, 24(11):3464--3477, 2015.

\bibitem{karp72}
R.~M. Karp.
\newblock Reducibility among combinatorial problems.
\newblock In {\em Complexity of Computer Computations, (ed. by R.E. Miller and
  J.W. Thatcher)}, pages 85--103. Plenum Press, New York, 1972.

\bibitem{kitzhofer-10}
J.~Kitzhofer and C.~Br\"{u}cker.
\newblock Tomographic particle tracking velocimetry using telecentric imaging.
\newblock {\em Exp. Fluids}, 49(6):1307--1324, 2010.

\bibitem{multitracking3}
Y.~Kuroki and T.~Matsui.
\newblock An approximation algorithm for multidimensional assignment problems
  minimizing the sum of squared errors.
\newblock {\em Discrete Appl. Math.}, 157(9):2124--2135, 2009.

\bibitem{batenburg10}
M.~Novara, K.~J. Batenburg, and F.~Scarano.
\newblock Motion tracking-enhanced {MART} for tomographic {PIV}.
\newblock {\em Meas. Sci. Technol.}, 21(3):035401, 2010.

\bibitem{multitracking7}
A.~B. Poore.
\newblock Multidimensional assignment formulation of data association problems
  arising from multitarget and multisensor tracking.
\newblock {\em Comput. Optim. Appl.}, 3(1):27--57, 1994.

\bibitem{multitracking2}
A.~B. Poore and S.~Gadaleta.
\newblock Some assignment problems arising from multiple target tracking.
\newblock {\em Math. Comput. Modelling}, 43(9-10):1074--1091, 2006.

\bibitem{multitracking1}
A.~B. Poore and N.~Rijavec.
\newblock A {L}agrangian relaxation algorithm for multidimensional assignment
  problems arising from multitarget tracking.
\newblock {\em SIAM J. Optim.}, 3(3):544--563, 1993.

\bibitem{elementparticletracking}
J.~F. Pusztaszeri, P.~E. Rensing, and T.~M. Liebling.
\newblock Tracking elementary particles near their primary vertex: {A}
  combinatorial approach.
\newblock {\em J. Global Optim.}, 9(1):41--64, 1996.

\bibitem{schnoerr1}
P.~Ruhnau, C.~Guetter, T.~Putze, and C.~Schn{\"o}rr.
\newblock A variational approach for particle tracking velocimetry.
\newblock {\em Meas. Sci. Technol.}, 16(7):1449--1458, 2005.

\bibitem{ryser57}
H.~J. Ryser.
\newblock Combinatorial properties of matrices of zeros and ones.
\newblock {\em Canad. J. Math.}, 9(1):371--377, 1957.

\bibitem{schrijver-86}
A.~Schrijver.
\newblock {\em Theory of linear and integer programming}.
\newblock Wiley, Chichester, 1986.

\bibitem{steger}
J.~Schwartz, A.~Steger, and A.~Wei{\ss}l.
\newblock Fast algorithms for weighted bipartite matching.
\newblock {\em LNCS}, 3503:476--487, 2005.

\bibitem{multitracking10}
F.~C.~R. Spieksma and G.~J. Woeginger.
\newblock Geometric three-dimensional assignment problems.
\newblock {\em European J. Oper. Res.}, 91(3):611--618, 1996.

\bibitem{multitracking4}
P.~P.~A. Storms and F.~C.~R. Spieksma.
\newblock An {LP}-based algorithm for the data association problem in
  multitarget tracking.
\newblock {\em Comput. Oper. Res.}, 30(7):1067--1085, 2003.

\bibitem{van2009stability}
B.~Van~Dalen.
\newblock Stability results for uniquely determined sets from two directions in
  discrete tomography.
\newblock {\em Discrete Math.}, 309(12):3905--3916, 2009.

\bibitem{multitracking8}
J.~L. Walteros, C.~Vogiatzis, E.~L. Pasiliao, and P.~M. Pardalos.
\newblock Integer programming models for the multidimensional assignment
  problem with star costs.
\newblock {\em European J. Oper. Res.}, 235(3):553--568, 2014.

\bibitem{acceleratorbook}
H.~Wiedemann.
\newblock {\em Particle Accelerator Physics}.
\newblock Springer, New York, 2015.

\bibitem{Williams2011}
J.~Williams.
\newblock Application of tomographic particle image velocimetry to studies of
  transport in complex (dusty) plasma.
\newblock {\em Phys. Plasmas}, 18(5):050702, 2011.

\bibitem{dropletmotion}
L.~A. Zarrabeitia, F.~Z. Qureshi, and D.~A. Aruliah.
\newblock Stereo reconstruction of droplet flight trajectories.
\newblock {\em IEEE Trans. Pattern Anal. Mach. Intell.}, 37(4):847--861, 2015.

\bibitem{glidingarc-15}
J.~Zhu, J.~Gao, A.~Ehn, M.~Ald{\'e}n, Z.~Li, D.~Moseev, Y.~Kusano, M.~Salewski,
  A.~Alpers, P.~Gritzmann, and M.~Schwenk.
\newblock Measurements of {3D} slip velocities and plasma column lengths of a
  gliding arc discharge.
\newblock {\em Appl. Phys. Lett.}, 106(4):044101--1--4, 2015.

\end{thebibliography}
\end{document}